\newtheorem{theorem}{Theorem}[section]
\newtheorem{lemma}[theorem]{Lemma}
\newtheorem{problem}[theorem]{Problem}
\newtheorem{conjecture}[theorem]{Conjecture}
\newcommand{\tpart}{\textsc{3-partition}}
\newcommand{\mfst}{\textsc{Encoding by FST}}
\newcommand{\mefst}{\textsc{Promise encoding by FST}}
\newcommand{\mmefst}{\textsc{Modified promise encoding by FST}}
\newcommand{\pats}{\textsc{PATS}}
\newcommand{\north}{\ensuremath{\mathtt{N}}}
\newcommand{\west}{\ensuremath{\mathtt{W}}}
\newcommand{\south}{\ensuremath{\mathtt{S}}}
\newcommand{\east}{\ensuremath{\mathtt{E}}}
\newcommand{\Tile}[7]{\draw[fill=#3] (#1, #2)++(0.45,0.45) -- node[above] {#4} ++(180:0.9) -- node[left] {#5} ++(270:0.9) -- node[below] {#6} ++(0:0.9) -- node[right] {#7} ++(90:0.9);}
\newcommand{\Tzero}[6]{\Tile{#1}{#2}{cyan}{#3}{#4}{#5}{#6};}
\newcommand{\Tone}[6]{\Tile{#1}{#2}{gray}{#3}{#4}{#5}{#6};}
\newcommand{\Ttwo}[6]{\Tile{#1}{#2}{orange}{#3}{#4}{#5}{#6};}
\author{Shinnosuke Seki
\thanks{The University of Electro-Communications, Tokyo, Japan, \protect{\texttt{s.seki@uec.ac.jp}}}
\thanks{Work supported in part by JST Program to Disseminate Tenure Tracking System, MEXT, Japan, No.~6F36 and by JSPS Grant-in-Aid for Research Activity Start-up No.~15H06212.}
\and Andrew Winslow\thanks{University of Texas Rio Grande Valley, Edinburg, TX, USA, \protect{\texttt{andrew.winslow@utrgv.edu}}}}
\title{The Complexity of Fixed-Height\\Patterned Tile Self-Assembly}
\date{}
\begin{document}

\maketitle


\begin{abstract}
We characterize the complexity of the \pats{} problem for patterns of fixed height and color count in variants of the model where seed glues are either chosen or fixed and identical (so-called \emph{non-uniform} and \emph{uniform} variants).
We prove that both variants are \NP-complete for patterns of height~2 or more and admit $O(n)$-time algorithms for patterns of height~1.
We also prove that if the height and number of colors in the pattern is fixed, the non-uniform variant admits a $O(n)$-time algorithm while the uniform variant remains \NP-complete. 
The \NP-completeness results use a new reduction from a constrained version of a problem on finite state transducers.
\end{abstract}

\section{Introduction}
Winfree~\cite{Winfree-1998a} introduced the \emph{abstract tile assembly model (aTAM)} to capture nanoscale systems of DNA-based particles aggregating to form intricate crystals, leading to an entire field devoted to understanding the theoretical limits of such systems (see surveys by Doty~\cite{Doty-2012a} and Patitz~\cite{Patitz-2012a}).
Ma and Lombardi~\cite{Ma-2008a} introduced the \emph{patterned self-assembly tile set synthesis (\pats{})} problem, of designing a tile set of minimum size that assembles into a given $n \times h$ colored pattern by attaching to an L-shaped seed.

Czeizler and Popa~\cite{Czeizler-2013a} were the first to provide a proof that the \pats{} problem is \NP-hard, thus establishing the problem as \NP-complete.
Subsequent work studied the hardness of the constrained version where the patterns have at most $c$ colors, called the \emph{$c$-\pats{}} problem. 
This line of work proved the 60-\pats{}~\cite{Seki-2013a}, 29-\pats{}~\cite{Johnsen-2013a}, 11-\pats{}~\cite{Johnsen-2015a}, and finally the 2-\pats{}~\cite{Kari-2015a} problems \NP-complete.

Here we study the complexity of parameterized \emph{height-$h$} \pats{} and $c$-\pats{} problems where patterns have a specified fixed height $h$ and increasing width $n$.
We consider both \emph{uniform} and \emph{non-uniform} model variants, where the glues along the seed are fixed and identical or chosen in tandem with the tile set, respectively.
We characterize the computational complexity of these problems via the following results:
\begin{enumerate}[$\bullet$]
\item The height-$2$ \pats{} problem is \NP-complete in both models (Sec.~\ref{sec:PATS-hard}).
\item The uniform height-$2$ $3$-\pats{} problem is \NP-complete (Sec.~\ref{sec:cPATS-hard}). 
\item The non-uniform height-$h$ $c$-\pats{} problem and uniform height-1 \pats{} problems admit $c^{c^{O(h)}}n$-time and $O(n)$-time algorithms, respectively (Sec.~\ref{sec:PATS-easy}).
\end{enumerate}
The \NP-completeness results also apply to patterns of height greater than~2.
Thus the complexity of the \pats{} problem for all combinations of height, color, and uniformity are characterized, except uniform height-2 2-\pats{}.

The \NP-hardness reductions are based on a reduction for a new variant of the minimum-state finite state transducer problem, originally proved \NP-hard by Angluin~\cite{Angluin-1978a} and by Vazirani and Vazirani~\cite{Vazirani-1983a}.
In this variant, any solution transducer is also promised to satisfy additional constraints on its transitions.
The reduction is also substantially simpler than the reduction given in~\cite{Angluin-1978a} and uses input and output strings of just two symbols, rather than the three of~\cite{Vazirani-1983a}.

\section{Preliminaries}

\paragraph{Patterns, tiles, assemblies, and seeds.}
Define $\mathbb{N}_k = \{1, 2, \dots, k\}$.
A \emph{pattern} is a partial function $P : \mathbb{N}^2 \rightarrow C$, i.e. a function that maps a rectangular region of lattice points to a set of colors $C$ (see Figure~\ref{fig:mintile-examples}).
If $\mathrm{dom}(P) = \mathbb{N}_w \times \mathbb{N}_h$, then $P$ is a \emph{width-$w$ height-$h$} pattern.
The codomain of $P$, i.e. the colors seen in the pattern, is denoted $\mathrm{color}(P)$. 
A pattern $P$ is \textit{$c$-color} provided $|\mathrm{color}(P)| \le c$. 

A \emph{tile type} $t$ is a colored unit square with each edge labeled; these labels are called \emph{glues}.
A tile type's color is denoted $\mathrm{color}(t)$.
For a direction $d \in \{\north, \west, \south, \east\}$, $t[d]$ denotes the glue assigned to side $d$ of $t$. 
A tile type is non-rotatable, and thus is uniquely identified by its color and four glues.
Instances of tile types, called \emph{tiles}, are placed with their centers in $\mathbb{N}^2$.

An \emph{assembly} is an arrangement of tiles from a set of tile types $T$; formally a partial function $A : \mathbb{N}^2 \rightarrow T \cup \{\varnothing\}$.
A \emph{seed} is an ``L-shaped'' assembly with domain $\{(0, 0)\} \cup \{(x, 0) : x \in \mathbb{N}_w\} \cup \{(0, y) : y \in \mathbb{N}_h\}$ for some $w, h \in \mathbb{N}$.
The \emph{pattern} of an assembly $A$ is defined as $P_A((x, y)) = \mathrm{color}(A((x, y)))$ for $(x, y) \in \mathrm{dom}(A) \cap \mathbb{N}^2$, i.e. the color pattern of $A$, \emph{excluding} the seed.

\paragraph{RTASs.}
A \emph{rectilinear tile assembly system (RTAS)} is a pair $\mathcal{T} = (T, \sigma)$, where $T$ is a set of tile types and $\sigma$ is a \emph{seed}.
An assembly $A$ \emph{yields} an assembly $A'$ with $\mathrm{dom}(A') = \mathrm{dom}(A) \cup \{(x, y)\}$ provided $(x-1, y), (x, y-1) \in \mathrm{dom}(A)$ and $A((x-1, y))[\east] = A'((x, y))[\west]$,  $A((x, y-1))[\north] = A'((x, y))[\south]$. 
The \emph{producible assemblies} of an RTAS are those that can be yielded, starting with the seed assembly $\sigma$.
That is:

\begin{description}
\item[RTAS Tiling Rule:] 
A tile of type $t$ can be added to an assembly $A$ at location $(x, y)$ provided $(x-1, y), (x, y-1) \in \mathrm{dom}(A)$ and the east and north glues of the tiles at $(x-1, y)$ and $(x, y-1)$ are the same as the west and south glues of $t$, respectively. 
\end{description}

As a result, tiling proceeds from southwest to northeast, i.e., a tile is first placed at $(1, 1)$, then at either $(1, 2)$ or $(2, 1)$, etc.
The \emph{terminal assemblies} of an RTAS are the producible assemblies that do not yield other (larger) assemblies.
If every terminal assembly of the system has pattern $P$, the system is said to \emph{uniquely self-assemble $P$}.
An RTAS $(T, \sigma)$ is \emph{directed}, i.e. deterministic, provided for any distinct tile types $t_1, t_2 \in T$, either $t_1[\west] \neq t_2[\west]$ or $t_1[\south] \neq t_2[\south]$. 

\paragraph{Uniform RTASs.}
We also define a practical variant of an RTAS called a \emph{uniform} RTAS. 
An RTAS $(T, \sigma)$ is \textit{uniform} provided there exist two glues $\ell_\east, \ell_\north$ such that $\sigma((x, 0))[\east] = \ell_\east$ for all $x \in \mathbb{N}_w$ and $\sigma((0, y))[\north] = \ell_\north$ for all $y \in \mathbb{N}_h$.
In other words, the seed glues cannot be programmed and are generic.

\paragraph{The PATS problem.}
The \emph{pattern self-assembly tile set synthesis problem (PATS)}~\cite{Ma-2008a} asks for the minimum-size RTAS that uniquely self-assembles a given rectangular color pattern, where the size of an RTAS $(T, \sigma)$ is $|T|$, the number of tile types.
Bounding the number of colors or height of the input pattern yields the following practically motivated special cases of PATS: 

\begin{figure}[ht]
\centering
\includegraphics[scale=0.7]{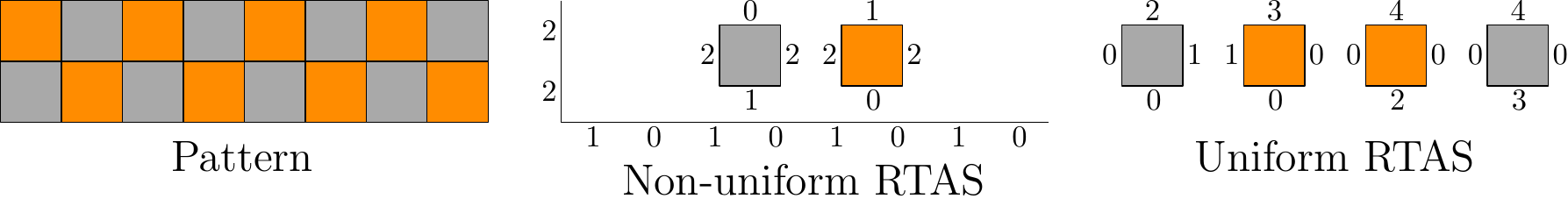}
\caption{A height-2 2-color pattern and minimum-size RTASs uniquely assembling the pattern in the uniform and non-uniform models.}
\label{fig:mintile-examples}
\end{figure}

\begin{problem}[$c$-colored \pats{} or $c$-\pats{}]
Given a $c$-colored pattern $P$ and integer $t$, does there exist an RTAS of size $\leq t$ that uniquely self-assembles $P$?
\end{problem}

\begin{problem}[Height-$h$ \pats{}]
Given a height-$h$ pattern $P$ and integer $t$, does there exist an RTAS of size $\leq t$ that uniquely self-assembles $P$?
\end{problem}

Restricting the system to be uniform gives rise to \textit{uniform} variants as well, contrasting with the conventional \textit{non-uniform} variants.

\section{Minimum-State Finite State Transducer is \NP-hard}
\label{sec:FST-reduction}

We start with a reduction from a well-known \NP-complete problem on integers, called \tpart{}, to a problem on \emph{finite state transducers} or \emph{FSTs}: deterministic finite automata where each transition is augmented with an output symbol and thus \emph{transduces} an input string into an output string of the same length.

\begin{problem}[\tpart{}]
Given a set of integers $A = \{a_1, a_2, \dots, a_{3n}\}$ with $\sum_{a_i \in A}{a_i}/n = p$ and $p/4 < a_i < p/2$, does there exist a partition of $A$ into $n$ sets, each with sum $p$?
\end{problem}

\begin{theorem}[\cite{Garey-1975a}]
\tpart{} is strongly \NP-hard, i.e., is \NP-hard when the elements of $A$ are given in unary.
\end{theorem}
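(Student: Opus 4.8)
\medskip
\noindent\textbf{Proof proposal.}
Membership in \NP{} is immediate: a nondeterministically guessed partition of $A$ is verified in polynomial time by summing, and the constraint $p/4 < a_i < p/2$ forces every part of any sum-$p$ partition to contain exactly three elements (two of them sum to less than $p$, four to more than $p$), so it suffices to look for partitions into triples. For hardness I would reduce from \textsc{3-dimensional matching} (3DM), shown \NP-complete by Karp; after a standard preprocessing step one may assume each element of the three ground sets occurs in a bounded number of triples, so the 3DM instance size is linear in the number $q$ of elements per set. (Passing through an intermediate \textsc{4-partition} instance is slightly cleaner, but the idea is the same.)

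Given a 3DM instance with disjoint ground sets $W, X, Y$ of size $q$ and a set $M$ of triples, the construction produces $3n$ integers in a positional, radix-$r$ representation together with a target value $p$. The integers are of two kinds. For each triple in $M$ there is one \emph{triple number} whose digits jointly encode which elements of $W$, $X$, and $Y$ it contains, spread over a \emph{constant} number of digit blocks plus a few control blocks; and there are \emph{dummy numbers} whose purpose is (i) to pad every group to size three and (ii) to contribute, in each block, exactly the amount needed so that a group totals $p$ precisely when the triple number inside it covers a still-uncovered element of each of $W$, $X$, and $Y$. The radix $r$ (polynomial in $q$) is chosen large enough that the sum of any three of these integers produces no carry from one digit block into the next, so that the condition of summing to $p$ decomposes into an independent condition on each block; finally a uniform padding constant is added to all integers so that each lies strictly between $p/4$ and $p/2$ and the overall sum is $np$.

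Correctness then splits into the two usual directions. If the 3DM instance has a perfect matching $M' \subseteq M$ of size $q$, group each triple of $M'$ with the dummy numbers designed to complete it and distribute the remaining dummies so that each group totals $p$ --- a block-by-block check made sound by the no-carry property. Conversely, from any partition of $A$ into sum-$p$ triples, the no-carry property forces each block of every group to equal the corresponding block of $p$; unwinding the encoding shows that the triple numbers occurring across the groups cover every element of $W \cup X \cup Y$ exactly once, hence form a perfect matching. The reduction is plainly polynomial time, and --- this is the force of the word \emph{strongly} --- because only $O(1)$ digit blocks are used and $r$ is polynomial in $q$, every integer produced is bounded by a polynomial in the input size and can therefore be written in unary with only polynomial blow-up, so \NP-hardness survives under the unary encoding.

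The step I expect to be the main obstacle is the simultaneous engineering of the encoding: the radix and the digit layout must be arranged so that (a) no carry ever occurs, (b) in each block the value $p$ is attainable \emph{only} by the intended combination of a triple number with its matching dummies, and (c) after the uniform padding every number still lies strictly inside the open interval $(p/4, p/2)$, so that all parts are forced to have exactly three elements. Making these three requirements mutually consistent while keeping the numbers polynomially bounded --- which rules out the naive layout that devotes one digit position to each ground-set element, and is precisely what upgrades the result to a \emph{strong} hardness result --- is the technical heart; the remaining checks are routine.
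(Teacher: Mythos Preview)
The paper does not prove this theorem; it is simply quoted from Garey and Johnson~\cite{Garey-1975a} as a known result and used as the source problem for the reductions in Section~\ref{sec:FST-reduction}. So there is no ``paper's own proof'' to compare against.

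As a standalone sketch your outline is broadly faithful to the classical argument: reduce from 3DM, encode triples and elements as integers in a positional system with a constant number of fields and a polynomially large radix, and observe that polynomially bounded numbers survive the unary blow-up. That high-level picture is correct, and your emphasis on \emph{why} the strong form requires $O(1)$ fields rather than one position per ground-set element is exactly the right thing to flag. Two remarks, though. First, you explicitly defer the one step that carries all the weight: you say the simultaneous engineering of the encoding (no carries, uniqueness of the intended combinations per field, and the $(p/4,p/2)$ padding) is ``the technical heart'' and ``the main obstacle,'' and then do not supply it. Since everything else is routine, what remains is not a proof but a plan; the construction of the actual field layout and the dummy/element numbers is precisely where the argument lives. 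Second, you mention in passing that going through \textsc{4-partition} is ``slightly cleaner'' and then proceed as if a direct 3DM $\to$ \textsc{3-partition} reduction with $O(1)$ fields is just a variant of the same idea. In the standard treatment the 4-element grouping is not incidental: it is what lets a single triple number be paired with one element-number for each of $W$, $X$, $Y$, so that the per-field sum conditions can be made to hold exactly. Collapsing that to groups of three while keeping $O(1)$ fields and polynomial magnitudes requires an additional gadget (or the second reduction \textsc{4-partition} $\to$ \textsc{3-partition}) that your sketch does not indicate. If you intend a direct route, that is the place where a concrete construction is needed, not just an assertion that it can be done.
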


Formally, a FST is a 4-tuple $T = \langle \Sigma, Q, s_0, \delta \rangle$, where $\Sigma$ is the \emph{alphabet}, $Q$ is a finite set of \emph{states} of $T$, $s_0 \in Q$ is the \emph{start state} of $T$, and $\delta : Q \times \Sigma \rightarrow Q \times \Sigma$ is the \emph{transition function} of $T$.
An input-output quadruple $\delta(s_i, b) = (s_j, b')$ is a \emph{transition}, specifically a \emph{$(b, b')$-transition} or a \emph{$b$-transition}. 
The size of $T$ is equal to $|Q|$.

\begin{problem}[\mfst{}]
Given two strings $S$, $S'$ and integer $K$, does there exist a FST with at most $K$ states that transduces $S$ to $S'$?
\end{problem}

The \mfst{} problem was previously shown to be \NP-hard by Angluin~\cite{Angluin-1978a} and by Vazirani and Vazirani~\cite{Vazirani-1983a}.
Here we prove a constrained variant is also \NP-hard:

\begin{problem}[\mefst{}]
Given two strings $S$, $S'$ and an integer $K$ with the following promises about any FST $T$ with at most $K$ states transducing $S$ to $S'$, does such a $T$ exist?
\begin{itemize}
\item Each state of $T$ has exactly one incoming 0-transition.
\item Each state of $T$ has exactly one incoming 1-transition.
\item When transducing $S$ to $S'$:
\begin{itemize}
\item $K-1$ distinct $(0, 0)$-transitions are used.
\item $K$ distinct $(1, 1)$-transitions are used.
\item 1 distinct $(0, 1)$-transition is used.
\item The order that transitions are traversed is given as part of the input.\footnote{The order is given by a sequence $t_1, t_2, \dots, t_{|S|}$ with each $t_i \in \{1, 2, \dots, 2K\}$ specifying which transition is traversed. Thus the order that transitions are revisited is specified, but the states visited are not.}
\end{itemize}
\end{itemize}
\end{problem}

\begin{lemma}
\label{lem:promise-min-state-FST-NP-hard}
The \mefst{} problem is \NP-hard.
\end{lemma}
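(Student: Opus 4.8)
The plan is to reduce from \tpart{}. Given an instance $A = \{a_1, \dots, a_{3n}\}$ with target sum $p$ (elements in unary), I would construct strings $S$, $S'$, an integer $K$, and a traversal order so that a size-$K$ FST transducing $S$ to $S'$ exists exactly when $A$ admits a partition into $n$ triples each summing to $p$. The natural choice is $K = p + 1$ (or some small linear function of $p$ and $n$): the intended transducer will have $p$ states arranged in a cycle joined by $(1,1)$-transitions — call them $q_0, q_1, \dots, q_{p-1}$ with $q_i \xrightarrow{1/1} q_{i+1 \bmod p}$ — realizing the "clock'' that counts positions modulo $p$. The single $(0,1)$-transition and the $K-1$ $(0,0)$-transitions carry the "which block am I in'' information: reading a block of $a_i$ ones advances the clock by $a_i$, and a partition into triples summing to $p$ corresponds exactly to returning to $q_0$ after every three consecutive blocks. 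The key design work is to encode each integer $a_i$ as a run of $a_i$ symbols in $S$, with separators between blocks and between triples, and to define $S'$ so that the output forced on these separators pins down the state at block boundaries — this is what makes the triple-sum constraint equivalent to state consistency, i.e. to the FST having only $K$ states rather than more.

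The steps, in order, are: (1) fix the encoding of an \tpart{} instance as $(S, S', K)$ together with the transition-traversal sequence $t_1, \dots, t_{|S|}$; (2) describe the "honest'' transducer built from a valid partition and verify it has $K$ states, transduces $S$ to $S'$, uses exactly $K-1$ distinct $(0,0)$-transitions, $K$ distinct $(1,1)$-transitions, one $(0,1)$-transition, satisfies the in-degree promises (one incoming $0$-transition and one incoming $1$-transition per state), and follows the prescribed traversal order — this establishes completeness of the reduction and simultaneously shows the promises are satisfiable; (3) prove soundness: assume any FST $T$ with at most $K$ states transduces $S$ to $S'$, use the forced outputs on separators to argue that $T$ must realize the modular-clock behavior, hence the positions where the clock is at $q_0$ partition the blocks into groups of three with equal sums, yielding a valid \tpart{} partition; (4) observe that the argument in step (3) in fact shows every such $T$ satisfies the listed promises (the in-degree-exactly-one conditions follow from a counting/pigeonhole argument once $|Q| \le K$ and all $2K$ transition types must be present to produce the outputs), so the stronger \mefst{} promises hold automatically — meaning the promise version is no easier and the reduction is valid for it.

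The main obstacle is step (3)/(4): showing that the promises are forced. It is easy to exhibit one good transducer; the delicate part is arguing that a size-$\le K$ transducer \emph{cannot} cheat — that it cannot, for instance, use fewer than $K$ states by cleverly overloading transitions, or reorder/merge blocks to fake a partition. The handle on this is a careful accounting of glues/outputs: because $S'$ is chosen so that reading the $i$-th separator outputs a symbol determined by the running position modulo $p$, and because the $2K$ transitions must between them generate all the required outputs, a Myhill–Nerode-style argument pins each state to a unique residue class, forcing $|Q| = K$ exactly and forcing the cyclic structure. I expect the bulk of the write-up to be verifying that the separators carry enough information to make this rigidity argument go through, and checking that the specified traversal order $t_1, \dots, t_{|S|}$ is consistent with — and only with — the honest transducer. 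Since \tpart{} is strongly \NP-hard, the unary encoding of the $a_i$ keeps $|S|$, $|S'|$, and $K$ polynomial in the input size, so the reduction runs in polynomial time, completing the proof.
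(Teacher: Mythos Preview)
Your proposal has a genuine gap: with $K \approx p+1$ and the $(1,1)$-transitions forming a fixed ``clock'' cycle of length $p$, there is no place left in the transducer to encode the \emph{choice} of partition. You write that ``a partition into triples summing to $p$ corresponds exactly to returning to $q_0$ after every three consecutive blocks,'' but once $S$ is fixed the blocks appear in a fixed left-to-right order, and the transducer reads them in that order. Whether the first three, the next three, etc.\ sum to $p$ is a property of the input, not something the FST gets to decide; your construction would therefore be testing a single fixed grouping, which is trivially checkable in linear time. For a correct reduction the FST must have some structural freedom that corresponds to selecting \emph{which} elements go together, and you have not identified where that freedom lives.

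The paper's construction addresses exactly this point and looks quite different from your sketch. It takes $K = 3pn + n + 1$ (not $\Theta(p)$), and it is the \emph{0-transitions} that form the forced $K$-cycle via the opening segment $0^{K-1}00^{K-1} \to 0^{K-1}10^{K-1}$. The 1-transitions are then constrained by three gadget segments: \emph{fixed singletons} (self-loop 1-transitions), \emph{half-fixed triples} (a 3-cycle of 1-transitions through two specified consecutive states $s_i,s_{i+1}$ and one \emph{unspecified} free state $s_j$), and \emph{half-fixed intervals} (forcing the free states of consecutive half-fixed triples to themselves be consecutive). The first $2pn$ states host the fixed halves of the intervals; among the remaining states, $n+1$ fixed singletons carve out $n$ ``boxes'' of $p$ consecutive states each. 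The only freedom in any $K$-state solution is where to place the unspecified halves of the intervals---indivisible rods of lengths $a_1,\dots,a_{3n}$---into these boxes, which is exactly \tpart{}. The promise conditions (one incoming 0- and one incoming 1-transition per state; the specific transition counts and traversal order) then fall out because every transition lies on a cycle of length $1$, $3$, or $K$. If you want to salvage your approach, you would need an analogous mechanism that leaves a partition-shaped degree of freedom in the transition function; the mod-$p$ clock alone does not provide one.
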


\begin{proof}
The reduction is initially presented as a reduction from \tpart{} to \mfst{}.
The final paragraph extends the reduction to \mefst{}.

We borrow from~\cite{Vazirani-1983a} the approach of constructing $S$ and $S'$ by concatenating \emph{segments}: pairs of input and output substrings of equal length that enforce specific structure in a solution FST.
An input string $X$ and output string $Y$ paired as a segment is denoted $X \rightarrow Y$.

The integer output by the reduction is $K = 3pn + n + 1$, where $n$ is the number of parts in the partition and $p$ the size of each part. 
The first segment is $0^{K-1} 0 0^{K-1} \rightarrow 0^{K-1} 1 0^{K-1}$.
This segment enforces that a solution FST must have $K$ states; label them $s_1, s_2, \dots, s_K$.
Then for all $i < n$, $\delta(s_i, 0) = (s_{i+1}, 0)$ and $\delta(s_K, 0) = (s_1, 1)$.

\begin{figure}[ht]
\centering
\includegraphics[scale=0.83]{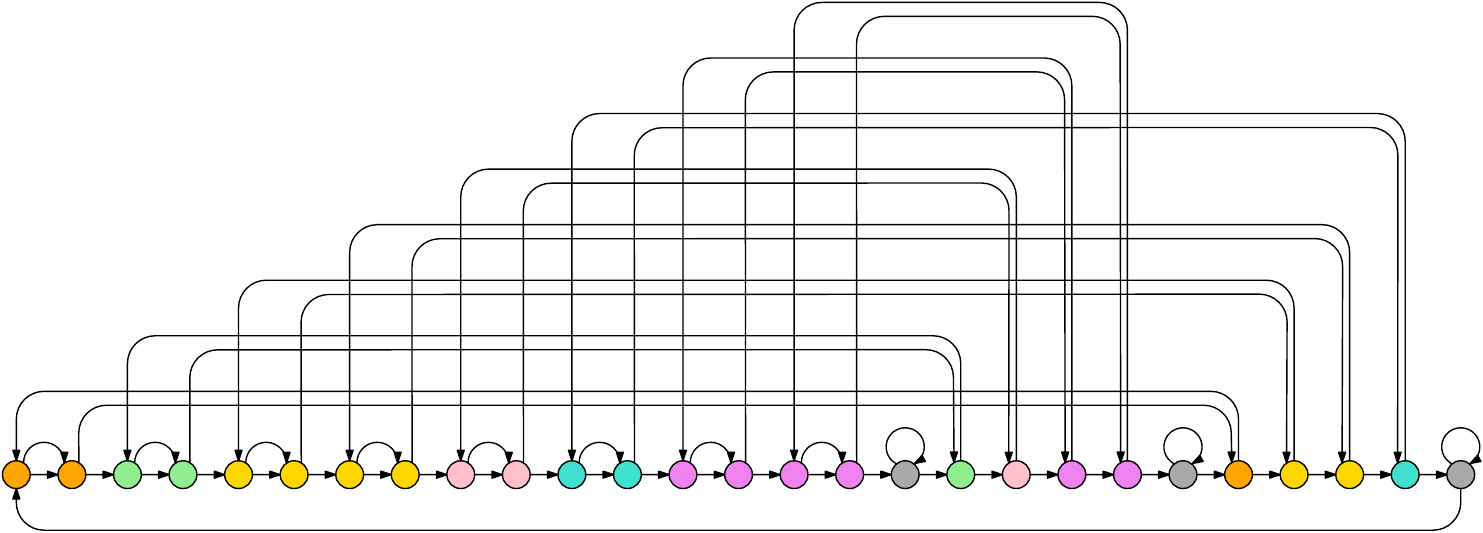}
\caption{A solution FST for a toy reduction from \tpart{} to \mefst{} with integers $a_1 = 1, a_2 = 1, a_3 = 2, a_4 = 1, a_5 = 1, a_6 = 2$ (invalid due to duplicate elements and violating $p/4 < a_i < p/2$, but used for illustrative purposes).
The left-to-right states are $s_1$ to $s_K$, colored by their half-fixed interval.
Transitions above the states are $(1, 1)$-transitions.
All others are $(0, 0)$-transitions except the lowermost, a $(0, 1)$-transition.
The solution corresponds to the partition $\{\{a_2, a_4, a_6\}, \{a_1, a_3, a_5\}\}$, encoded in the assignment of the right halves of half-fixed intervals to states between the rightmost three fixed singletons (gray).}
\label{fig:fst-reduction}
\end{figure}

The problem of partitioning integers of $A$ into sets of size $p$ is implemented in the collection of 1-transitions that leave states.
Each state has a 1-transition that either points to itself (a \emph{fixed singleton}) or is one edge in a 3-cycle formed by two consecutive specified states and an unspecified third state (a \emph{half-fixed triple}).
Half-fixed triples are further organized into \emph{half-fixed intervals}, each consisting of a set of $2a_i$ consecutive specified states and a set of $a_i$ consecutive unspecified states for some $a_i$.
The states are partitioned into three subsets: 
\begin{enumerate}
\item States $s_1$ through $s_{2pn}$ consist of the specified halves of the half-fixed intervals ($2a_i$ consecutive states for each integer $a_i \in A$).
\item $n+1$ equally-spaced fixed singletons in states $s_{2pn+1}, \dots, s_K$ (partitioning the remaining $pn$ states into $n$ sets of $p$ consecutive states each).
\item The remaining $pn$ states in states $s_{2pn+1}, \dots, s_K$ partitioned into $n$ sets of $p$ consecutive states. 
\end{enumerate}
See Figure~\ref{fig:fst-reduction} for a toy example of the reduction.

The reduction works by leaving one aspect of the FST unspecified: the assignment of the unspecified halves of half-fixed intervals (sequences of $a_i$ consecutive states for each $a_i \in A$) to the $n$ sequences of $p$ consecutive states each (the third subset mentioned above).

These unspecified halves must be assigned to these sequences of $p$ consecutive states, since all other states are either fixed halves of half-fixed intervals (first subset above) or fixed singletons (second subset above).
Also, since no state can lie on multiple 3-cycles of 1-transitions, no state can be in multiple half-fixed triples.
So no state can be in multiple half-fixed intervals and thus the unspecified halves of half-fixed intervals cannot overlap.

Thus the unspecified halves of half-fixed intervals may be viewed as indivisible ``rods'' of consecutive states (of lengths $a_i$ for each $a_i \in A$) that must be placed into $n$ state ``boxes'', each of length $p$, without overlap. 
Since the sum of the rod lengths is exactly $pn$, this problem is equivalent to partitioning the rods into $n$ subsets, each of sum $p$, i.e. \tpart{}.

In Figure~2, the input (invalid) \tpart{} instance $A = \{a_1=1, a_2=1, a_3=2, a_4=1, a_5=1, a_6=2\}$ has a solution $\{\{a_2, a_4, a_6\}, \{a_1, a_3, a_5\}\}$.
This corresponds to a solution FST obtained by assigning the unspecified halves of half-fixed intervals corresponding to $a_2$, $a_4$, and $a_6$ (of lengths 1, 1, and 2, respectively) into the first sequence of~$p=4$ consecutive states, and similarly the unspecified halves of the half-fixed intervals corresponding to $a_1$, $a_3$, and $a_5$ (of lengths 1, 2, and 1, resp.) into the second sequence of $p=4$ consecutive states. 
All that remains is to describe the segments that force the construction of a fixed singleton, half-fixed triple, and half-fixed interval.

\paragraph{Fixed singleton.} 
The fixed singleton segment ensures that a given state $s_i$ has $\delta(s_i, 1) = (s_i, 1)$.
This is done by moving the current state to $s_i$, transducing a~1 to a~1, and checking whether the current state is still $s_i$ (see Figure~\ref{fig:fst-fixed-pair}).

\begin{figure}[ht!]
\centering
\includegraphics[scale=1.0]{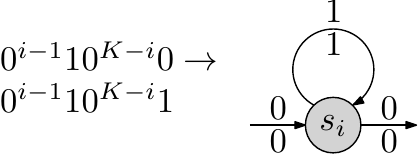}
\caption{The fixed single segment and corresponding FST structure enforced.}
\label{fig:fst-fixed-pair}
\end{figure}

\paragraph{Half-fixed triple.}
The half-fixed triple segment forces two specified \emph{fixed} states $s_i$, $s_{i+1}$ and an unspecified \emph{free} third state $s_j$ to have $\delta(s_i, 1) = (s_{i+1}, 1)$, $\delta(s_{i+1}, 1) = (s_j, 1)$, and $\delta(s_j, 1) = (s_i, 1)$ (see Figure~\ref{fig:fst-half-fixed-triple}).  

\begin{figure}[ht!]
\centering
\includegraphics[scale=1.0]{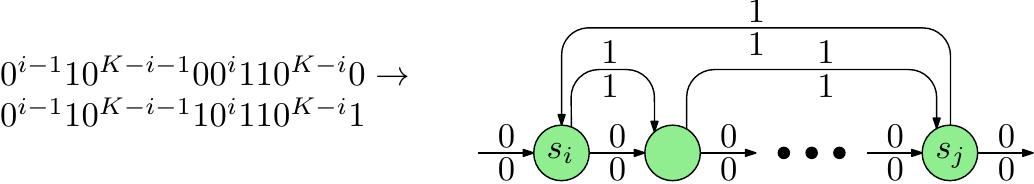}
\caption{The half-fixed triple segment and corresponding FST structure enforced.}
\label{fig:fst-half-fixed-triple}
\end{figure}

The segment consists of two subsegments that each ensures a portion of the structure.
The first, $0^{i-1}10^{K-i-1}0$ $\rightarrow$ $0^{i-1}10^{K-i-1}1$, ensures that $\delta(s_i, 1) = (s_{i+1}, 1)$.
The second, $0^i 11 0^{K-i}0 \rightarrow 0^i 11 0^{K-i}1$, ensures that $\delta(s_{i+1}, 1) = (s_j, 1)$ and $\delta(s_j, 1) = (s_i, 1)$.
The state $s_j$ cannot be in a fixed state of another half-fixed triple segment with fixed states $s_i'$, $s_{i+1}'$ and free state $s_j'$, as then either:
\begin{itemize}
\item $s_j = s_i'$ and thus $\delta(s_j, 1) = (s_{i+1}', 1) \neq (s_i, 1)$ (and thus the segment $0^{i-1}10^{K-i-1}0 \rightarrow 0^{i-1}10^{K-i-1}1$ is not transduced).
\item $s_j = s_{i+1}'$ and $(s_j', 1) = \delta(s_{i+1}', 1) = (s_i, 1)$, so $\delta(s_j', 1) = (s_{i+1}, 1) \neq (s_i', 1)$ (and thus the segment $0^{i'-1}10^{K-i'-1}0 \rightarrow 0^{i'-1}10^{K-i'-1}1$ is not transduced).
\end{itemize}
 
\begin{figure}[ht!]
\centering
\includegraphics[scale=1.0]{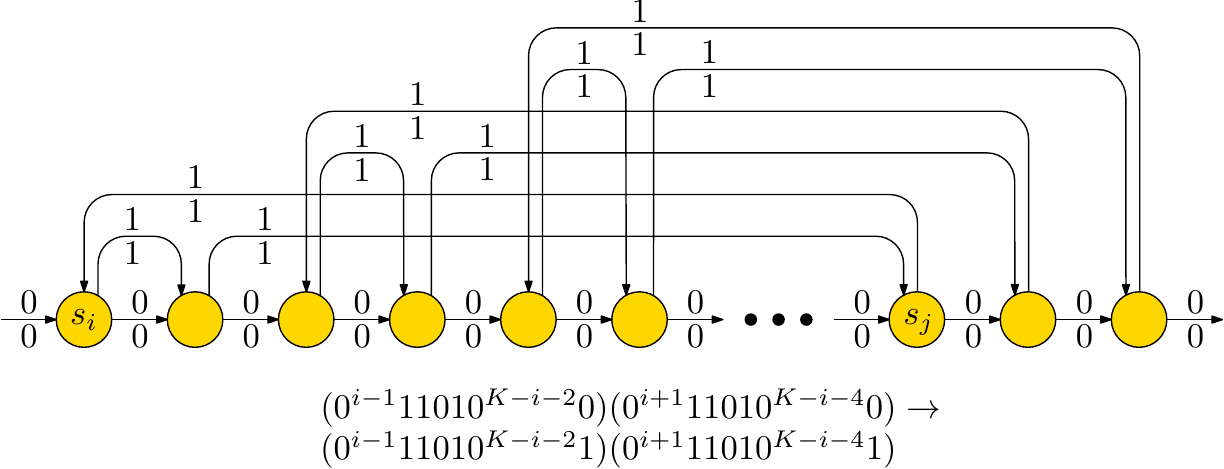}
\caption{The half-fixed interval segment for three consecutive free states and corresponding FST structure enforced.}
\label{fig:fst-half-fixed-interval}
\end{figure}

\paragraph{Half-fixed interval.}
The half-fixed interval forces a collection of half-fixed triples with consecutive fixed states to also have consecutive free states.
It does so by a simple traversal of the free states, checking that each has the expected pair of consecutive fixed states (see Figure~\ref{fig:fst-half-fixed-interval}). 

\paragraph{Promises.}
Any solution transduction for the previous reduction uses an FST where each state has at most one incoming 1-transition and 0-transition, since every transition lies on a cycle (of length 1,~3, or~$K$).
Also, any solution transduction by an FST with $K$ states traverses $2K$ distinct transitions (with $K-1$ $(0, 0)$-transitions, 1 $(0, 1)$-transition, and $K$ $(1, 1)$-transitions).
Any other solution FST must have at least $2K$ states and traverse at least $2K+1$ distinct transitions: $2K$ 0-transitions and at least one 1-transition.
Thus the reduction proves that not only \mfst{}, but also \mefst{} is \NP-hard.
\end{proof}

\section{Height-$2$ \pats{} is \NP-complete}
\label{sec:PATS-hard}

G\"{o}\"{o}s and Orponen~\cite{Goos-2010a} establish that all the variations of the \pats{} problem considered here are in \NP.
So we need only consider their \NP-hardness.

\begin{theorem}
\label{thm:2xn-NU-PATS-NP-hard}
The non-uniform height-2 \pats{} problem is \NP-hard.
\end{theorem}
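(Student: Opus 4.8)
The plan is to reduce from \mefst{}, which is \NP-hard by Lemma~\ref{lem:promise-min-state-FST-NP-hard}. Given an instance of \mefst{}---strings $S, S'$ over $\{0,1\}$, an integer $K$, and the traversal order $t_1, t_2, \dots, t_{|S|}$---I would construct in polynomial time a height-$2$ pattern $P$ and a threshold $t$ so that a size-$\le t$ RTAS uniquely self-assembles $P$ exactly when a $K$-state FST transduces $S$ to $S'$ in the prescribed order. The guiding principle is that a directed height-$h$ RTAS acts like a finite-state transducer scanning the seed's bottom row from left to right: the ``state'' between columns $x$ and $x{+}1$ is the $h$-tuple of vertical glues across that boundary, the input symbol read at column $x$ is the north glue of the seed cell $(x,0)$, and the tiles filling column $x$ emit both the next state and an output recorded in their colors. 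For height $2$, a single glue on the top edge of the inter-column interface suffices to carry the transducer state, and each FST transition becomes one tile type; thus $|T|$ is controlled by the number of transitions used, which the \mefst{} promises pin to $2K-\Theta(1)$.

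Concretely, the seed's bottom row would encode $S$---and this is exactly where non-uniformity matters, since the generic seed of a \emph{uniform} RTAS cannot feed in an arbitrary input word. The bottom row of the assembly reads $S$ upward using $O(1)$ ``input'' tile types (its colors recording $S$), while the top row performs the transduction: a top tile has west glue equal to the current state, south glue equal to the input symbol just read, east glue equal to the next state, and color equal to the output symbol, so it is in bijection with one FST transition. To \emph{force} the tile set---and to exploit the given traversal order---I would additionally encode the identity of the transition $t_x$ applied at step $x$ into the per-step coloring (one column, or a short block of columns, per step), so that an economical RTAS must spend exactly one tile type on each of the $2K$ transitions of an FST solution plus $O(1)$ auxiliary tiles; set $t$ to this quantity. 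Completeness: from a $K$-state FST $T$ solving the instance, take the RTAS whose tiles are the input tiles together with one tile per transition of $T$, with seed encoding $S$ and the start state; the promise that every state has exactly one incoming $0$-transition and one incoming $1$-transition is precisely what makes this RTAS directed and the interface glue at each boundary uniquely decodable, so it uniquely self-assembles $P$ with $\le t$ tiles.

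Soundness: from an RTAS of size $\le t$ uniquely self-assembling $P$, extract a $K$-state FST. One first argues that the pattern is rigid enough to force every terminal assembly to grow column-by-column in lockstep, with the effect that each relevant tile type is pinned down by its $(\west,\south)$ glues and induces a partial transition on (state, input) pairs; reading the states off as the distinct top-boundary glues and the transitions off as the tile types then yields an FST that transduces $S$ to $S'$ in the prescribed order, with $\le K$ states because $|T| \le t$. Note that the \mefst{} promises are not conditions to be checked here but structural facts automatically satisfied by any FST obtained this way---this is exactly why Lemma~\ref{lem:promise-min-state-FST-NP-hard} is stated for the constrained problem rather than \mfst{}.

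The hard part is this soundness direction: ruling out a ``clever'' small RTAS that undercuts the transducer bound by exploiting two-dimensionality (using the bottom row to relay or precompute data for the top row), by being non-directed (many terminal assemblies, all with pattern $P$, built from overlapping tiny tile sets), or by abusing the programmable seed. Overcoming this means designing $P$ so that the two rows are decoupled up to $O(1)$ tiles and the per-step blocks are internally rigid, pinning any size-$\le t$ assembly to the intended transducer form. Together with membership in \NP (G\"{o}\"{o}s and Orponen~\cite{Goos-2010a}) this gives \NP-completeness, and padding a height-$2$ instance with $h-2$ trivial extra rows (costing $O(1)$ more tiles) extends the hardness to every fixed height $\ge 2$.
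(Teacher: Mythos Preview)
Your proposal is essentially the paper's approach: reduce from \mefst{}, let the bottom row carry $S$ (two colors, two tile types), let the top row carry the given traversal order (one color per transition label, hence $2K$ colors and $2K$ tile types), and set the budget to $2K+2$. The paper's soundness argument is exactly the dichotomy you anticipate for the bottom row: with only two bottom-row types, their north glues either distinguish $0$ from $1$ (so the top row ``sees'' $S$) or coincide (so the top row is driven by a single south glue and becomes a short repetition, contradicting the prescribed $2K$-color pattern); in the former case the $2K$ top-row types read off as the $2K$ transitions.

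One small correction: directedness of the RTAS you build in the completeness direction comes from \emph{determinism} of the FST (distinct $(\text{source},\text{input})$ pairs yield distinct $(\west,\south)$ glues), not from the incoming-transition promises. Those promises concern $(\text{destination},\text{input})$, i.e.\ east/south glues, and the paper in fact observes they can be dropped---it finishes the argument against the variant of \mefst{} with the first two promises removed. So your invocation of those promises is misplaced, though harmless.
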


\begin{proof}
The pattern output by the reduction consists of a bottom row encoding $S$ and a top row encoding the sequence of transitions traversed when transducing $S$ to $S'$ (provided as part of the \mefst{} instance).
The bottom row encoding uses two colors, pink and red, corresponding to the two symbols in $S$.
The top row encoding uses $2K$ colors, one for each transition used in the transduction of $S$ to $S'$.
The number of tile types permitted is $T = 2K+2$: one type per color.

The north glues of the bottom row either encode $S$ (distinct north glues for the pink and red tile types) or $0^{|S|}$ (same glue).
The latter is impossible, since then the leftmost $|S|$ locations of the top row are filled by many repetitions of the same $K$ transitions. 
So the north glues of the bottom row encode $S$.

A set of $2K$ tile types that assemble the top row is equivalent to a set of $2K$ transitions transducing $S$ to $S'$, with source and destination states corresponding to west and east glues. 
So the top row can be assembled using $2K$ tile types exactly when $S$ can be transduced to $S'$ using $2K$ transitions of the specified types traversed in the specified order. 
Note that these tile types may share the east glue in common, corresponding to that a state has more than one incoming 0-transition or more than one incoming 1-transition. 
Nevertheless, we can say that the pattern can be assembled using a tile set of at most $2K$ types exactly when there is a solution transducer for the corresponding instance of a harder variant of \mefst{} that is obtained by dropping the first two promises, on incoming transitions, from \mefst{} (the harder variant remains \NP-hard). 
\end{proof}

\begin{theorem}
\label{thm:2xn-U-PATS-NP-hard}
The uniform height-$2$ \pats{} problem is \NP-hard.
\end{theorem}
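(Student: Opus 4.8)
The plan is to adapt the reduction behind Theorem~\ref{thm:2xn-NU-PATS-NP-hard}, whose pattern is a bottom row spelling out $S$ in two colors together with a top row spelling out the transition sequence of the transduction $S\to S'$. In that reduction the programmable seed does the essential work: its bottom row is hardwired so that its north glues encode $S$ cell by cell, a pair of ``reader'' tile types copies this into the colors and north glues of row~1, and the top-row tile types then simulate the transducer. In the uniform model this first step is unavailable: every cell of row~1 sees the same generic south glue, so the color placed in row~1 at column $x$ depends only on the west glue arriving from column $x-1$; consequently the row-1 tiles form an (eventually) cyclic skeleton under the east-glue transition function, and the whole $2\times n$ pattern produced by a uniform RTAS is forced to be eventually horizontally periodic with period polynomial in the tile budget. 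So a faithful reduction cannot encode an arbitrary $S$ in row~1; it must work with a pattern whose ``interesting part'' is a single period.

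My plan is therefore to reduce from a modified variant of \mefst{} in which the input string $S$ is itself of the rigid periodic shape (roughly $(0^{K-1}1)^{m}$ for large $m$) that the first segment of the reduction in Lemma~\ref{lem:promise-min-state-FST-NP-hard} already forces, so that row~1 of the output pattern \emph{can} be realized by a small tile set at all --- namely as a $K$-cycle of tile types whose east-glue transition function is exactly the $K$-state $0$-cycle $\delta(s_i,0)=(s_{i+1},0)$, $\delta(s_K,0)=(s_1,1)$ of a solution FST, with the lone ``$1$''-colored tile of the period matching the lone $1$ output by $\delta(s_K,0)$. The \tpart{} data is carried, as in \mefst{}, by the part of the transducer left free --- the $1$-transitions, equivalently the assignment of the free ``rods'' of states into the $n$ length-$p$ ``boxes''; in the pattern this degree of freedom shows up as which of the $K$ row-1 tile types are forced to share a north glue. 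The top row, which reads exactly those north glues, must reproduce the prescribed output sequence, and a size-$\le t$ tile set for it (with $t$ chosen just large enough to pay for the forced $K$-cycle plus a consistent set of $1$-transition tiles) exists precisely when the free structure can be chosen consistently, i.e.\ when the underlying \tpart{} instance is a yes-instance. Concretely the steps are: (i) define this modified FST problem and rerun the \tpart{} reduction to prove it \NP-hard; (ii) describe the $2\times n$ pattern and the budget $t$; (iii) from a solution FST build a uniform directed RTAS of size $\le t$, checking that the generic seed starts the row-1 walk in the correct phase and that the assembly fills exactly $\mathbb{N}_n\times\mathbb{N}_2$; (iv) from any uniform size-$\le t$ RTAS extract a solution FST.

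I expect step~(iv) to be the crux. One must show that the generic seed, despite being unprogrammable, still forces row~1 into the intended $K$-cycle: this needs that $(0^{K-1}1)^m$ has period exactly $K$ (not a proper divisor) so that no cheaper cyclic skeleton realizes it, and that the budget is tight enough that the top row's realizability genuinely measures whether the $1$-transitions can be laid out within the \tpart{} partition rather than encoding something weaker. One also has to be careful about unique self-assembly in the non-directed case --- ruling out spurious terminal assemblies that merely ``look periodic'' in a different way --- and about the transient/phase bookkeeping at the left edge, where the programmable seed absorbed exactly these issues for free in Theorem~\ref{thm:2xn-NU-PATS-NP-hard}.
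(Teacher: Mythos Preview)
Your diagnosis of the obstacle is correct, but your cure is not the paper's, and your step~(i) is a genuine gap rather than a routine adaptation. You rightly note that with a uniform seed, row~1 is eventually periodic with period at most the tile budget. But your inference that $S$ must therefore be made periodic does not follow: one can instead raise the tile budget to at least $|S|$ --- still polynomial in the instance --- so that the period bound is vacuous on a width-$O(|S|)$ pattern. The paper does exactly this. Its pattern has width $2|S|{+}1$: a bottom row of $|S|$ orange cells, one black cell, then $|S|$ more orange cells; a top row whose left half spells $S$ in two colors and whose right half is the $2K$-color transition sequence of Theorem~\ref{thm:2xn-NU-PATS-NP-hard}, with a single white cell in between. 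The budget is $T=|S|+2K+4$. The black cell forces the first $|S|$ orange bottom-row tiles to have pairwise distinct types (otherwise the whole bottom row is orange); with only two top-row types above them, their north glues must encode $S$. The second run of $|S|$ orange tiles, sharing the same uniform south glue, must then reuse those same $|S|$ types in the same order, so its north glues also encode $S$ --- recreating the programmable seed beneath the transduction half and reducing immediately to Theorem~\ref{thm:2xn-NU-PATS-NP-hard}.

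Your route, by contrast, needs a fresh \NP-hardness proof for a transduction problem with $S$ of the shape $(0^{K-1}1)^{m}$. This is not a rerun of Lemma~\ref{lem:promise-min-state-FST-NP-hard}: the fixed-singleton, half-fixed-triple, and half-fixed-interval segments there place $1$'s at deliberately \emph{non}-periodic positions of $S$ precisely in order to address individual states of the FST, and you give no mechanism for enforcing that structure with a periodic input. Even granting~(i), your step~(iv) would still owe a tight argument pinning row~1 to the intended $K$-cycle in the intended phase --- work the paper sidesteps entirely by spending $|S|$ tile types on the bottom row.
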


\begin{proof}
The pattern output by the reduction is the following, and consists of a left \emph{input half} and right \emph{transduction half}:
\[\arraycolsep=1.4pt
\begin{array}{ccc}
S & 
\square &
{\rm (transduction)} \\

\underbrace{\textcolor{orange}{\blacksquare} \textcolor{orange}{\blacksquare} \textcolor{orange}{\blacksquare} \cdots \textcolor{orange}{\blacksquare} \textcolor{orange}{\blacksquare} \textcolor{orange}{\blacksquare}}_{|S|} & 
\textcolor{black}{\blacksquare} &
\underbrace{\textcolor{orange}{\blacksquare} \textcolor{orange}{\blacksquare} \textcolor{orange}{\blacksquare} \cdots \textcolor{orange}{\blacksquare} \textcolor{orange}{\blacksquare} \textcolor{orange}{\blacksquare}}_{|S|}  
\end{array}
\]

The color patterns in the top rows of the input and transduction halves are identical to the bottom and top rows of the pattern used in the proof of Theorem~\ref{thm:2xn-NU-PATS-NP-hard}, respectively.
That is, they consist of two colors encoding $S$, and $2K$ colors encoding the sequence of transitions.
 
The bottom row of the input half consists of $|S|$ orange tiles; these must be of $|S|$ distinct types, otherwise the entire bottom row consists of orange tiles.
The number of tile types output by the reduction is $T = |S|+2K+4$, thus any solution set of tile types has exactly the following tile types: 
\begin{itemize}
\item $|S|$ orange tile types.
\item 2 types used in the top row of the input half.
\item $2K$ types used in the top row of the transduction half.
\item White and black types used for the two tiles between the halves.
\end{itemize}

Since only one tile type per color is used in the top row of the input half, the north glues of the length-$|S|$ orange tile sequence must encode $S$.
Also, since the bottom row of the transduction half consists of a length-$|S|$ sequence of orange tiles and these tiles are all contained in the bottom row of the input half, they must be the same sequence of tiles.
So the north glues of the bottom row of the transduction half encode $S$, and thus any set of $2K$ tile types used in the top row of the right half corresponds exactly to a set of transitions in a solution FST for the \mefst{} instance.
\end{proof}

The addition of more rows with a new common color and increasing $T$ by $1$ suffices to prove both the uniform and non-uniform variants \NP-hard for greater heights.

\section{Uniform Height-$2$ $3$-\pats{} is \NP-complete}
\label{sec:cPATS-hard}

The next problem is a variation of \mefst{}, which we prove to be \NP-complete by modifying the reduction from \tpart{} to \mefst{}.
This reduction serves as an introduction and warm-up for the subsequent reduction from \tpart{} to \mmefst{}. 

\begin{problem}[\mmefst{}]
Given two strings $S$, $S'$ and an integer $K \not\equiv 0 \pmod{3}$ with the following promises about any FST $T$ with at most $K$ states transducing $S$ to $S'$, does such a $T$ exist?
\begin{itemize}
\item Each state of $T$ has exactly one incoming 0-transition.
\item Each state of $T$ has exactly one incoming 1-transition.
\item Every $(1, 1)$-transition lies on a 1-cycle or 3-cycle of $(1, 1)$-transitions.
\item When transducing $S$ to $S'$:
\begin{itemize}
\item $K-1$ distinct $(0, 0)$-transitions are used.
\item $K-1$ distinct $(1, 1)$-transitions are used.
\item 1 distinct $(0, 1)$-transition is used.
\item 1 distinct $(1, 2)$-transition is used.
\item Every $(0, 1)$-transition traversal is followed immediately by a $(1, 2)$-transition traversal.
\item The order that transitions are traversed is given as part of the input.
\end{itemize}
\item The first and last $K+2$ symbols of $S$ are $10^K1$. 
\item The first and last $K+2$ symbols of $S'$ are $20^{K-1}12$. 
\end{itemize}
\end{problem}

\begin{lemma}
\label{lem:mmefst-NP-hard}
The \mmefst{} problem is \NP-hard.
\end{lemma}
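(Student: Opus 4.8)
The plan is to reuse the \tpart{}-to-\mefst{} reduction from the proof of Lemma~\ref{lem:promise-min-state-FST-NP-hard} and make one localized change: turn the fixed-singleton $(1,1)$-self-loop on a single distinguished state $v$ into a $(1,2)$-self-loop $\delta(v,1)=(v,2)$, where $v$ is chosen to be the state entered by the $(0,1)$-transition, i.e.\ the ``wrap'' edge $s_K\to v$ of the $0$-cycle. This instantly matches the new count promises of \mmefst{}: the $0$-transitions are unchanged ($K-1$ copies of $(0,0)$ plus one $(0,1)$), one $1$-transition becomes the unique $(1,2)$-transition, and the remaining $K-1$ states keep their $1$-transitions as $1$-cycles (fixed singletons) or $3$-cycles (half-fixed triples), so ``every $(1,1)$-transition lies on a $1$- or $3$-cycle'' holds by construction. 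To force $K = 3pn'+n'+1 \not\equiv 0 \pmod 3$, I would first apply a standard padding to the \tpart{} instance (e.g., after a scaling, add one dummy part of three fresh distinct elements near $p/3$) so that the number of parts $n'$ satisfies $n'\not\equiv 2\pmod 3$, whence $K\equiv n'+1\not\equiv 0\pmod 3$.

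The new ingredient is the mandated prefix/suffix segment $10^K1 \rightarrow 20^{K-1}12$. Read starting at $v$, its leading $1$ uses the $(1,2)$-self-loop (output $2$, stay at $v$); the block $0^K$ runs once around the $0$-cycle $v\to s_2\to\cdots\to s_K\to v$, outputting $0^{K-1}$ and then a $1$ on the $(0,1)$-transition; and the trailing $1$ again uses the $(1,2)$-self-loop. Since this block is required at both ends of $S$, any solution transduction must begin and end at $v$. I would then reshape every segment inherited from Lemma~\ref{lem:promise-min-state-FST-NP-hard} so that it begins by reading a $1$ at $v$ (traversing the $(1,2)$-self-loop) and ends by traversing the $(0,1)$-transition back into $v$; concatenating such segments, every $(0,1)$-transition traversal is then immediately followed by a $(1,2)$-transition traversal, as promised. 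Note that the old first segment $0^{2K-1}\to 0^{K-1}10^{K-1}$, which counted states by running around the $0$-cycle twice in a row, is now forbidden (it would follow a $(0,1)$-transition by a $(0,0)$-transition), so the state count and the tail-free structure of the $0$-transitions must be re-derived from the $10^K1\to 20^{K-1}12$ probe together with the $1$-transition segments.

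With this in place the \tpart{} correspondence is identical in spirit to Lemma~\ref{lem:promise-min-state-FST-NP-hard}: the $K-1$ non-$v$ states split into $2pn'$ fixed halves of half-fixed intervals, $pn'$ free halves, and $n'$ fixed singletons; the free halves form indivisible rods of lengths $a_i$ that must be packed without overlap into the $n'$ length-$p$ gaps delimited by the singletons, which is feasible precisely when $A$ admits a valid $3$-partition. I would then verify: (a) completeness, that a $3$-partition yields a $\le K$-state FST transducing $S$ to $S'$ satisfying every listed promise; (b) soundness, that any $\le K$-state FST transducing $S$ to $S'$ induces a $3$-partition; and (c) that any $\le K$-state FST transducing $S$ to $S'$ actually meets all the promises, with a recount confirming that such an FST uses exactly $2K$ distinct transitions ($K-1$ of type $(0,0)$, $1$ of type $(0,1)$, $K-1$ of type $(1,1)$, $1$ of type $(1,2)$) and that any non-intended FST needs at least $2K$ states.

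The main obstacle I expect is inside step (c): proving that a $\le K$-state FST transducing $S$ to $S'$ has its $0$-transitions forming a single $K$-cycle with no tail, now that the cheap ``traverse the cycle twice'' segment is no longer available. Reading $0^K$ in the prefix already pins the FST to exactly $K$ states and makes the possible $0$-transition structure rigid (a cycle possibly with a short tail), but additional short probe segments are needed to rule out a nonempty tail — equivalently, to force that every state has an incoming $0$-transition — while never re-traversing the $(0,1)$-transition with a $0$ following it. Designing those probes and checking that the half-fixed-triple forcing still goes through after prepending the ``read~$1$'' step to each segment is the fiddly core of the argument; the padding and the transition recount are routine by comparison.
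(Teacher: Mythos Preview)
Your plan is essentially the paper's, with two cosmetic differences. The paper does not convert an existing fixed singleton: it \emph{adds} a fresh state $s_{\rm new}$ carrying the $(1,2)$-self-loop, reroutes the $(0,1)$-edge so that it now runs $s_K\to s_{\rm new}$, inserts a $(0,0)$-edge $s_{\rm new}\to s_1$, and then relabels so that $s_{\rm new}$ becomes the new $s_1$. This leaves the half-fixed-interval layout of Lemma~\ref{lem:promise-min-state-FST-NP-hard} untouched, so no cycle rotation or re-indexing of the old segments is needed. For $K\not\equiv 0\pmod 3$ it likewise adds one more fixed-singleton state to the transducer rather than padding the \tpart{} instance. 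The segment reshaping is also simpler than you anticipate: the paper just replaces every wrap position $0\to 1$ in $S,S'$ by $01\to 12$ uniformly, which is exactly your ``prepend a $1$, end at the $(0,1)$-edge'' rule with the segment boundary shifted by one symbol.

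The tail concern you flag does not require new probe segments. A nonempty tail would mean the $(0,1)$-edge lands on some $u_m$ with $m\ge 1$, and then the $1$ now read immediately after every wrap forces $\delta(u_m,1)$ to output $2$; but the inherited fixed-singleton and half-fixed-triple segments already force $\delta(u_m,1)$ to output $1$ for every state other than the new $s_1$. This contradiction rules out the tail without extra machinery, so the existing segments already suffice for your step~(c).
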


\begin{proof}
The proof uses a modified version of the reduction from \tpart{} to \mefst{} (Lemma~\ref{lem:promise-min-state-FST-NP-hard}).
A new state $s_{\rm new}$ is added to the solution transducers with the following transitions as seen in Figure~\ref{fig:transducer}:
\begin{itemize}
\item A $(0, 0)$-transition from $s_{\rm new}$ to $s_1$.
\item $(1, 2)$-transition from $s_{\rm new}$ to itself. 
\item A $(0, 1)$-transition from $s_K$ to $s_{\rm new}$, replacing the $(0, 1)$-transition from $s_K$ to $s_1$.
\end{itemize}

\begin{figure}[ht]
\centering
\includegraphics[scale=1.0]{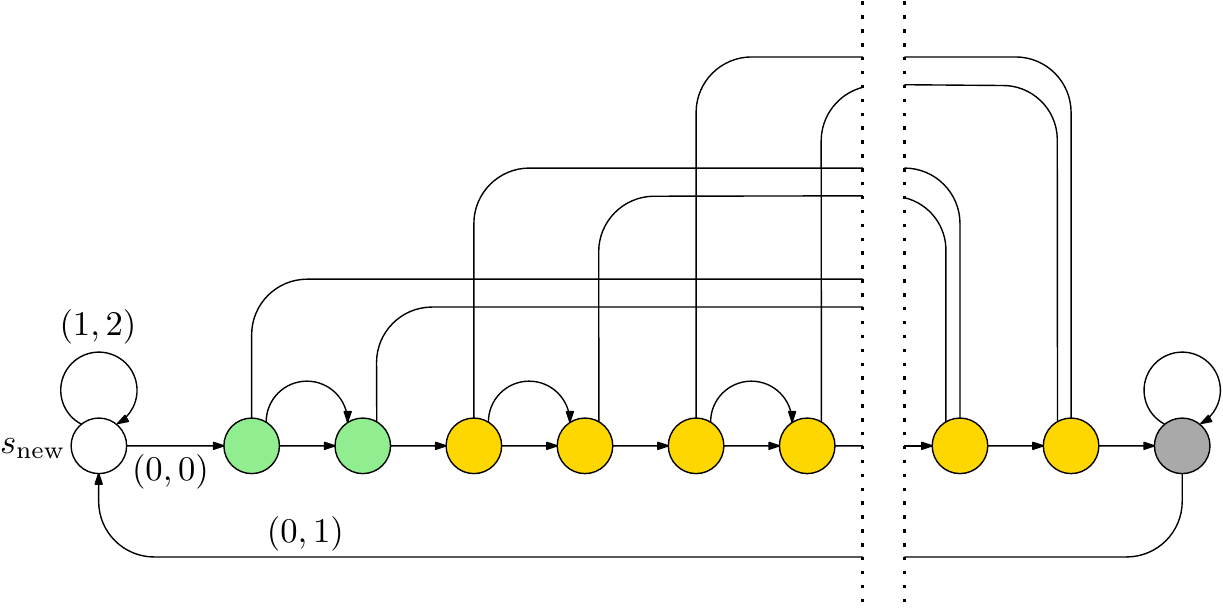}
\caption{The modified min-state FST instances used in the reduction of Theorem~\ref{thm:2xn-U-4PATS-NP-hard}.} 
\label{fig:transducer}
\end{figure}

If the number of states in this transducer is a multiple of~3, add another fixed singleton state to the transducer.
Reassign $K$ to be the number of states in the resulting transducer and label the states $s_1$ to $s_K$ in the order the are traversed by $(0, 0)$-transitions, starting with the state with a $(1, 2)$-transition to itself (formerly called $s_{\rm new}$).

To enforce such solution transducers, replace a segment $0 \rightarrow 1$ in the strings $S$ and $S'$ output by the reduction with $01 \rightarrow 12$.
Also add the segment $10^K1 \rightarrow 20^{K-1}12$ to the beginning of the strings to satisfy the last two constraints of the FST instances in the problem formulation.
The resulting modified strings $S$, $S'$ and integer $K$ yield an instance of \mmefst{} that is ``Yes'' if and only if the \tpart{} instance was also ``Yes''. 
\end{proof}

We expand the reduction above further towards the uniform height-2, 3-PATS problem now. 

\begin{theorem}
\label{thm:2xn-U-4PATS-NP-hard}
The uniform height-2, 3-PATS problem is \NP-hard. 
\end{theorem}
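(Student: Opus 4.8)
The plan is to adapt the reduction behind Theorem~\ref{thm:2xn-U-PATS-NP-hard} so that it starts from \mmefst{} (\NP-hard by Lemma~\ref{lem:mmefst-NP-hard}; in fact from the harder variant obtained by dropping the two promises on incoming transitions, still \NP-hard by the reasoning in the ``Promises'' paragraph of Lemma~\ref{lem:promise-min-state-FST-NP-hard}) and emits a height-$2$ pattern that uses only three colors. As in that proof, the pattern has a left \emph{input half} and a right \emph{transduction half} joined by $O(1)$ marker tiles. The bottom row of the input half is a block of $|S|$ tiles all of one fixed color; the tile budget forces these to be $|S|$ distinct tile types (otherwise determinism floods the whole bottom row with that color, contradicting the markers), so their north glues are free to encode the symbols of $S$, while the top row of the input half spells $S$ using just two tile types (one per input symbol), pinning those north glues to exactly two values. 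The bottom row of the transduction half is a second length-$|S|$ block of the same color, so in a directed system it is chained through the markers to the \emph{same} sequence of tiles, hence it too carries $S$ on its north glues.

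The one genuinely new ingredient is how the top row of the transduction half is colored. I cannot afford one color per transition, so instead each tile of that row is colored by the \emph{output symbol} of the transition it represents; since $S'$ is a string over a three-symbol alphabet this costs exactly three colors, and by reusing those three colors for the two input-half colors, the fixed color of the two length-$|S|$ blocks, and the markers, the whole pattern is three-colored. The correspondence with FSTs is then recovered from directedness rather than from color: a tile of the transduction row exposes a west glue, a south glue, a color, and an east glue, playing the roles of source state, input symbol, output symbol, and destination state, and the rule that a tile type is determined by its west and south glues is exactly the functionality of a transition function. Thus a set of at most $2K$ tile types assembling the transduction row is an FST transducing $S$ to $S'$ in the prescribed order of transitions, and conversely a solution FST yields such a set; taking the tile budget $T = |S| + 2K + O(1)$ and invoking G\"{o}\"{o}s and Orponen's result~\cite{Goos-2010a} that these \pats{} variants are in \NP{} then gives \NP-completeness.

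The step I expect to be the main obstacle --- and the reason \mmefst{} carries so many promises --- is showing that this three-color pattern is still \emph{rigid}, i.e.\ that the tile budget can be met only by a valid $K$-state FST and never by a cheaper degenerate tiling. The transduction row is allotted $2K$ tile types, i.e.\ $2K$ distinct transitions: a yes-instance supplies an FST using exactly the promised inventory ($K-1$ transitions of type $(0,0)$, $K-1$ of type $(1,1)$, one $(0,1)$, one $(1,2)$), whereas for a no-instance every FST transducing $S$ to $S'$ must use at least $2K+1$ distinct transitions, this being the same dichotomy (a solution has either exactly $K$ states or at least $2K$) established in the ``Promises'' paragraph of Lemma~\ref{lem:promise-min-state-FST-NP-hard}. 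Verifying the dichotomy in the present setting rests on the engineered prefix and suffix $10^K1 \to 20^{K-1}12$ of $S$ and $S'$ --- whose non-looping chain of $(0,0)$-transitions, closed by the unique $(0,1)$- and $(1,2)$-transitions, must visit $K$ pairwise distinct states --- together with the inherited structure of the rest of $S$, while the promise that every $(1,1)$-transition lies on a $1$-cycle or $3$-cycle and the hypothesis $K \not\equiv 0 \pmod{3}$ are what let the block of the transduction row that encodes the \tpart{} instance be tiled consistently with only three colors. A secondary, more clerical obstacle is bookkeeping the color reuse: one must check that no tile can simultaneously act as an input-block tile, a transduction-row tile, and/or a marker in a way that undercuts the budget, which is arranged by choosing the block color and the marker colors so that the forced chains of the two halves cannot merge.
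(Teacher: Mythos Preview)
Your plan diverges substantially from the paper's proof and, as sketched, has a real gap rather than a clerical one.

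The paper does \emph{not} keep the input-half/transduction-half architecture of Theorem~\ref{thm:2xn-U-PATS-NP-hard}. Instead it uses a single transduction gadget (bottom row all orange, top row $\phi(S')$) followed by a new \emph{FST-constructor} gadget of width $K^2$: $K$ blocks of width $K$, each with bottom row $\textcolor{cyan}{\blacksquare}^{K-1}\textcolor{gray}{\blacksquare}$ and top row a carefully chosen word $w_i\,\textcolor{gray}{\blacksquare}\textcolor{gray}{\blacksquare}\textcolor{gray}{\blacksquare}$ (or $w_{K-1}\,\textcolor{orange}{\blacksquare}\textcolor{orange}{\blacksquare}\textcolor{orange}{\blacksquare}$ for the last block). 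The tile budget is $|S'|+2K+2$. A long structural argument then shows, block by block, that any directed system meeting this budget must have exactly the tile types $t_0,\,t_{00,2},\dots,t_{00,K},\,t_{01},\,t_{11,2},\dots,t_{11,K},\,t_2$ with the specific south/east/west glues depicted in the proof, so that these $2K$ types literally \emph{are} the transitions of an FST with $K$ states $s_1,\dots,s_K$, a $(0,1)$-transition $s_K\to s_1$, and a $(1,2)$-loop at $s_1$. Only after this does the paper argue---using the segment structure of $S'$ (fixed singletons, half-fixed triples, half-fixed intervals) and the already-pinned transitions---that the north glues of the orange bottom row are forced to spell $S$. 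In other words, the paper \emph{derives} $S$ from the FST-constructor gadget; it never encodes $S$ in the pattern directly.

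Your approach tries to encode $S$ via an input half whose top row displays $S$ in two colors, concluding that ``just two tile types'' suffice there and hence that the north glues below are binary. That inference fails once colors are reused. With only three colors, the two colors used to display $S$ in the input half are the \emph{same} colors used for the $(0,0)$- and $(1,1)$-transitions in the transduction half, so nothing in the budget prevents the input-half top row from being tiled by many distinct cyan/gray types drawn from (or overlapping with) the $2K$ transduction types. Consequently the north glues $G$ of the bottom row need not be binary; they can take up to $2K+O(1)$ distinct values. Your top-row tile types then form a transducer over an input alphabet of size $|G|$, not over $\{0,1\}$, and the dichotomy you invoke (either $K$ states or at least $2K$) is a statement about binary-input FSTs transducing $S$ to $S'$; it says nothing about a machine transducing $G$ to $S$ from one start state and $G$ to $S'$ from another. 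This is precisely the hole the paper's $K^2$-column constructor gadget is designed to close, and it is the bulk of the proof, not a bookkeeping detail. Your remark that the transduction row respects ``the prescribed order of transitions'' is also off: in three colors the top row records only the output symbols $S'$, not the $2K$-color transition labels of Theorem~\ref{thm:2xn-NU-PATS-NP-hard}, so that promise is no longer visible in the pattern.
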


\begin{proof}
Let $(S, S', K)$ be a triple of strings $S \in \{0, 1\}^*, S' \in \{0, 1, 2\}^*$ and an integer $K$ reduced from an instance of \textsc{3-Partition} in the previous proof. 
We reduce it to an instance $(P, |S'|+2K+2)$ of the uniform height-2, 3-PATS problem, where $P$ is the following width-($1{+}|S'|{+}K^2$), height-2 pattern over 3 colors $\{\textcolor{cyan}{\blacksquare}, \textcolor{gray}{\blacksquare}, \textcolor{orange}{\blacksquare}\}$: 
\[\arraycolsep=1.4pt
\begin{array}{ccrrrccr}
\textcolor{cyan}{\blacksquare} & 
\phi(S') & 
\textcolor{cyan}{\blacksquare} \cdots \textcolor{cyan}{\blacksquare} \, \textcolor{gray}{\blacksquare}\textcolor{gray}{\blacksquare} \textcolor{gray}{\blacksquare} & 
w_1 \ \ \ \textcolor{gray}{\blacksquare}\textcolor{gray}{\blacksquare} \textcolor{gray}{\blacksquare} & 
w_2 \ \ \ \textcolor{gray}{\blacksquare}\textcolor{gray}{\blacksquare} \textcolor{gray}{\blacksquare} & 
\cdots & 
\textcolor{gray}{\blacksquare} &
w_{K-1} \ \textcolor{orange}{\blacksquare}\textcolor{orange}{\blacksquare}\textcolor{orange}{\blacksquare} \\

\textcolor{orange}{\blacksquare} & 
\underbrace{\textcolor{orange}{\blacksquare} \textcolor{orange}{\blacksquare} \cdots \textcolor{orange}{\blacksquare}}_{|S'|} & 
\underbrace{\textcolor{cyan}{\blacksquare} \cdots \textcolor{cyan}{\blacksquare}}_{K-3}\textcolor{cyan}{\blacksquare}\textcolor{cyan}{\blacksquare} \textcolor{gray}{\blacksquare} & 
\underbrace{\textcolor{cyan}{\blacksquare} \cdots \textcolor{cyan}{\blacksquare}}_{K-3}\textcolor{cyan}{\blacksquare}\textcolor{cyan}{\blacksquare} \textcolor{gray}{\blacksquare} & 
\underbrace{\textcolor{cyan}{\blacksquare} \cdots \textcolor{cyan}{\blacksquare}}_{K-3}\textcolor{cyan}{\blacksquare}\textcolor{cyan}{\blacksquare} \textcolor{gray}{\blacksquare} & 
\cdots & 
\textcolor{gray}{\blacksquare} &
\underbrace{\textcolor{cyan}{\blacksquare} \cdots \textcolor{cyan}{\blacksquare}}_{K-3}\textcolor{cyan}{\blacksquare}\textcolor{cyan}{\blacksquare} \textcolor{gray}{\blacksquare}
\end{array}
\]
where $\phi: \{0, 1, 2\}^* \to \{\textcolor{cyan}{\blacksquare}, \textcolor{gray}{\blacksquare}, \textcolor{orange}{\blacksquare}\}^*$ is a homomorphism that maps 0 to $\textcolor{cyan}{\blacksquare}$, 1 to $\textcolor{gray}{\blacksquare}$, and 2 to $\textcolor{orange}{\blacksquare}$, respectively, and for $1 \le i \le K{-}1$, 
\[
	w_i = 
	\begin{cases} 
	\textcolor{cyan}{\blacksquare}^{(3i \bmod{K})-1} \textcolor{gray}{\blacksquare} \textcolor{cyan}{\blacksquare}^{K-3-(3i \bmod{K})} & \text{if $3i \le K{-}3 \pmod{K}$} \\
	\textcolor{cyan}{\blacksquare}^{K-3} & \text{otherwise}.
	\end{cases}
\] 
Notice that, for any $1 \le i < j \le K$, $w_i$ and $w_j$ differ in the location of their singleton gray tiles.
Split the pattern $P$ into the leftmost $|S'|{+}1$ columns and the remainder, called the \emph{transduction} and \emph{FST-constructor} gadgets, respectively.
The FST-constructor gadget is further partitioned into $K$ rectangular width-$K$, height-2 blocks.

Next, consider the constraints on RTASs with at most $|S'|{+}2K{+}2$ tile types that uniquely self-assemble $P$.
Lemma~1 of G\"{o}\"{o}s and Orponen~\cite{Goos-2010a} states that any smallest RTAS that uniquely self-assembles a pattern is directed.
As we will prove, directed RTASs uniquely self-assembling $P$ have size at least $|S'|{+}2K{+}2$ tile types; thus we need only consider directed systems.

Let the north and east glues of the seed be~0.
The leftmost $|S'|{+}1$ locations in the bottom row of $P$ are orange, with a cyan location following.
So these positions must be tiled with orange tiles of pairwise-distinct type; the need for $|S'|+1$ distinct orange tile types thus arises.
Similarly, the leftmost $K{-}1$ cyan locations in the bottom row must use $K{-}1$ distinct cyan tile types.
These tile types share the south glue~0, and since the system is directed, their west glues are pairwise distinct.
Label these $K{-}1$ cyan tile types left-to-right $t_{00, 2}, t_{00, 3}, \dots, t_{00, K}$ and the gray tile type immediately right $t_{01}$, as seen below.\footnote{In these later labels, the first subscript indicates the kind of transition of the FST that the tile type will be shown to simulate, e.g., $t_{00, i}$ is a $(0, 0)$-transition, $t_{01}$ and $(0, 1)$-transition, etc.} 
The cyan tile in the northwest corner of $P$ cannot have the same type as any of these $K{-}1$ types, since otherwise this tile can also appear in the southwest corner of $P$.
Call this type $t_0$.
There are $K$ tile types to be colored yet (illustrated as a dotted square). 

\scalebox{0.65}{\begin{tikzpicture}
\Tzero{-2.25}{2}{}{0}{Not 0}{}; \node at (-2.25, 2.75) {$t_0$};

\Tzero{0}{0}{}{$s_1$}{0}{$s_2$}; \node at (0, -1.25) {$t_{00, 2}$};
\Tzero{2.25}{0}{}{$s_2$}{0}{$s_3$}; \node at (2.25, -1.25) {$t_{00, 3}$};
\node at (4, 0) {\Large $\cdots$};
\Tzero{6}{0}{}{$s_{K-3}$}{0}{$s_{K-2}$}; \node at (6, -1.25) {$t_{00, K-2}$};
\Tzero{9}{0}{}{$s_{K-2}$}{0}{$s_{K-1}$}; \node at (9, -1.25) {$t_{00, K-1}$};
\Tzero{12}{0}{}{$s_{K-1}$}{0}{$s_{K}$}; \node at (12, -1.25) {$t_{00, K}$};
\Tone{14.25}{0}{}{$s_K$}{0}{}; \node at (14.25, -1.25) {$t_{01}$};

\foreach \x in {0, 2.25, 6, 9, 12, 14.25} {
\draw[dotted] (\x, 2)++(180:0.5)++(90:0.5) -- ++(0:1) -- ++(270:1) -- ++(180:1) -- ++(90:1);
\node at (\x, 2) {?};
}
\end{tikzpicture}}

These $K$ tile types will turn out to be necessary, implying $(|S'|{+}1){+}(K{-}1)+2 + K{-}1{+}1=|S'|{+}2K{+}2$ types total with $K{-}1$ colored gray and one colored orange. 
For this, we claim that the bottom row of all blocks but the first assemble identically by establishing that the gray tiles attaching to the southeast corner of the first two blocks are identical. 
Suppose not. 
Then the bottom row of the second block cannot reuse cyan tile types used in the bottom row of the first block. 
So the uncolored $K$ tile types must consist of one gray and $K{-}1$ cyan types with south glue~0. 
Thus the complete tile set includes only two gray tile types both of which have the south glue~0. 

Consider the gray tile attaching at the northeast corner of the first block. 
Its south glue is~0 and its west glue is equal to the east glue of the gray tile attaching to its immediate left. 
This contradicts the directedness of the system, since a cyan tile is provided with the same pair of west and south glues. 
Indeed, both gray tile types appear at the southeast corner of a block and to their east are cyan tiles attaching. 

The verified claim brings following properties for all but the first block:
\begin{enumerate}[Property~1:]
\item\label{prop:common_south} For any $1 \le i \le K{-}1$, tiles attaching at the $i$-th top-row position of any two blocks but the first one have the same south glue; tiles attaching at the $K$-th top-row position (northeast corner) of any two blocks including the first one have the same south glue. 
\item\label{prop:distinct_type} Any such pair of tiles have pairwise-distinct east glues (and types).
\item\label{prop:non0_south} The assembly of the bottom row is provided with at least two different kinds of north glues. 
\end{enumerate}
Property~\ref{prop:distinct_type} holds since an orange tile is placed in the northeast corner of only the last block.
Thus without Property~\ref{prop:non0_south}, $\Omega(K^2)$ tile types would be necessary to place the orange tile. 
Observe that for each $1 \le i \le K{-}3$, the $i$-th position of exactly one block is gray and the counterpart of all other blocks are cyan; for each $K{-}2 \le i \le K$, the $i$-th position of only the last block is orange and the counterpart of all others are gray. 
Thus, Properties~\ref{prop:common_south} and~\ref{prop:distinct_type} imply that the tile type set must contain one orange and $K{-}1$ gray tile types whose south glue is equal to the north glue of $t_{01}$ and one gray and $K{-}2$ cyan tile types with a common south glue. 

We claim these requirements enforce that the north glue of $t_{01}$ is not~0.
Suppose otherwise, then the former requirement implies $K{-}2$ extra gray tile types with south glue~0. 
So at most~3 tile types, including $t_0$, have the non-0 south glue, and Property~\ref{prop:non0_south} cannot be satisfied. 
Thus, the north glue of $t_{01}$ is not 0; call it~1.
Tiles attaching at the northeast corner of the blocks must all have distinct types due to Property~\ref{prop:distinct_type}, and now also their south glues must be~1.
The $K$ uncolored tile types thus have south glue~1, and one is colored orange and all the others are colored gray.  

\scalebox{0.65}{\begin{tikzpicture}
\Tzero{-2.25}{2}{}{0}{Not 0}{};\node at (-2.25, 3) {$t_0$};
\Tone{0}{2}{}{}{1}{}; \node at (0, 3) {$t_{11, 2}$};
\Tone{2.25}{2}{}{}{1}{}; \node at (2.25, 3) {$t_{11, 3}$};
\node at (4, 2) {\Large $\cdots$};
\Tone{6}{2}{}{}{1}{}; \node at (6, 3) {$t_{11, K-2}$};
\Tone{9}{2}{}{}{1}{}; \node at (9, 3) {$t_{11, K-1}$};
\Tone{12}{2}{}{}{1}{}; \node at (12, 3) {$t_{11, K}$};
\Ttwo{14.25}{2}{}{}{1}{}; \node at (14.25, 3) {$t_2$};

\Tzero{0}{0}{}{$s_1$}{0}{$s_2$}; \node at (0, -1.25) {$t_{00, 2}$};
\Tzero{2.25}{0}{}{$s_2$}{0}{$s_3$}; \node at (2.25, -1.25) {$t_{00, 3}$};
\node at (4, 0) {\Large $\cdots$};
\Tzero{6}{0}{}{$s_{K-3}$}{0}{$s_{K-2}$}; \node at (6, -1.25) {$t_{00, K-2}$};
\Tzero{9}{0}{}{$s_{K-2}$}{0}{$s_{K-1}$}; \node at (9, -1.25) {$t_{00, K-1}$};
\Tzero{12}{0}{}{$s_{K-1}$}{0}{$s_{K}$}; \node at (12, -1.25) {$t_{00, K}$};
\Tone{14.25}{0}{1}{$s_K$}{0}{}; \node at (14.25, -1.25) {$t_{01}$};
\end{tikzpicture}}

\noindent
Note that a $t_0$ tile cannot attach anywhere in the blocks. 
Indeed, it causes glue mismatch with the seed being placed on the bottom row, and in order for it to attach on the top row, it must share its south glue with $K{-}3$ cyan tile types due to Properties~\ref{prop:common_south} and \ref{prop:distinct_type}. 
In summary, any minimum tile set uniquely assembling $P$ consists of $K$ cyan tile types, $K$ gray ones, and $|S'|{+}2$ orange ones.

Now we prove constraints on the glues of these types. 
With only $K{-}1$ cyan tile types with south glue 0, even the first block must assemble its bottom row as other blocks do. 
That is, the bottom row of all blocks assemble as $t_{00, 2} t_{00, 3} \cdots t_{00, K} t_{01}$. 
Thus, the east glue of $t_{01}$ is equal to the west glue of $t_{00,2}$, that is, $s_1$.
Since $t_0$ does not appear in any block, Property~\ref{prop:distinct_type} implies that the north glues of $t_{00, 2}, t_{00, 3}, \ldots, t_{00, K-2}$ are~0 and that the north glues of $t_{00, K-1}$ and $t_{00, K}$ are~1. 

The top row of the last block is $w_K \textcolor{orange}{\blacksquare} \textcolor{orange}{\blacksquare} \textcolor{orange}{\blacksquare} = \textcolor{cyan}{\blacksquare}^{K-4} \textcolor{gray}{\blacksquare} \textcolor{orange}{\blacksquare} \textcolor{orange}{\blacksquare} \textcolor{orange}{\blacksquare}$. 
The corresponding bottom row exposes the glues $0^{K-3}111$ northwards because it is assembled as $t_{00, 2} \cdots t_{00, K} t_{01}$. 
Since the east glue of all orange tiles but $t_2$ is 0, the last four positions $\textcolor{gray}{\blacksquare} \textcolor{orange}{\blacksquare} \textcolor{orange}{\blacksquare} \textcolor{orange}{\blacksquare}$ of the top row are assembled as $t_{01} t_2 t_2 t_2$. 
This imposes that both the east and west glues of $t_2$ must be equal to the east glue of $t_{01}$, that is, $s_1$. 
Since $S'$ begins with~2, the east glue of $t_0$ is $s_1$. 

\scalebox{0.65}{\begin{tikzpicture}
\Tzero{-2.25}{2}{}{0}{Not 0}{$s_1$};\node at (-2.25, 3) {$t_0$};
\Tone{0}{2}{}{}{1}{}; \node at (0, 3) {$t_{11, 2}$};
\Tone{2.25}{2}{}{}{1}{}; \node at (2.25, 3) {$t_{11, 3}$};
\node at (4, 2) {\Large $\cdots$};
\Tone{6}{2}{}{}{1}{}; \node at (6, 3) {$t_{11, K-2}$};
\Tone{9}{2}{}{}{1}{}; \node at (9, 3) {$t_{11, K-1}$};
\Tone{12}{2}{}{}{1}{}; \node at (12, 3) {$t_{11, K}$};
\Ttwo{14.25}{2}{}{$s_1$}{1}{$s_1$}; \node at (14.25, 3) {$t_2$};

\Tzero{0}{0}{0}{$s_1$}{0}{$s_2$}; \node at (0, -1.25) {$t_{00, 2}$};
\Tzero{2.25}{0}{0}{$s_2$}{0}{$s_3$}; \node at (2.25, -1.25) {$t_{00, 3}$};
\node at (4, 0) {\Large $\cdots$};
\Tzero{6}{0}{0}{$s_{K-3}$}{0}{$s_{K-2}$}; \node at (6, -1.25) {$t_{00, K-2}$};
\Tzero{9}{0}{1}{$s_{K-2}$}{0}{$s_{K-1}$}; \node at (9, -1.25) {$t_{00, K-1}$};
\Tzero{12}{0}{1}{$s_{K-1}$}{0}{$s_{K}$}; \node at (12, -1.25) {$t_{00, K}$};
\Tone{14.25}{0}{1}{$s_K$}{0}{$s_1$}; \node at (14.25, -1.25) {$t_{01}$};
\end{tikzpicture}}

Since $S'$ ends with~2, no tile (necessarily of type $t_{11, 2}, t_{11, 3}, \dots, t_{11, K}$ by Properties~\ref{prop:common_south} and~\ref{prop:distinct_type}) appearing at the northeast corner of a block has east glue $s_1$. 
Moreover, tiles attaching to their east are of type $t_{00, 2}, \ldots, t_{00, K}$ or $t_{01}$, thus their east glues are in $\{s_2, s_3, \ldots, s_K\}$.
Without loss of generality, assign them as follows: 

\scalebox{0.65}{\begin{tikzpicture}
\Tzero{-2.25}{2}{}{0}{Not 0}{$s_1$};\node at (-2.25, 3) {$t_0$};
\Tone{0}{2}{}{}{1}{$s_2$}; \node at (0, 3) {$t_{11, 2}$};
\Tone{2.25}{2}{}{}{1}{$s_3$}; \node at (2.25, 3) {$t_{11, 3}$};
\node at (4, 2) {\Large $\cdots$};
\Tone{6}{2}{}{}{1}{$s_{K-2}$}; \node at (6, 3) {$t_{11, K-2}$};
\Tone{9}{2}{}{}{1}{$s_{K-1}$}; \node at (9, 3) {$t_{11, K-1}$};
\Tone{12}{2}{}{}{1}{$s_K$}; \node at (12, 3) {$t_{11, K}$};
\Ttwo{14.25}{2}{}{$s_1$}{1}{$s_1$}; \node at (14.25, 3) {$t_2$};

\Tzero{0}{0}{0}{$s_1$}{0}{$s_2$}; \node at (0, -1.25) {$t_{00, 2}$};
\Tzero{2.25}{0}{0}{$s_2$}{0}{$s_3$}; \node at (2.25, -1.25) {$t_{00, 3}$};
\node at (4, 0) {\Large $\cdots$};
\Tzero{6}{0}{0}{$s_{K-3}$}{0}{$s_{K-2}$}; \node at (6, -1.25) {$t_{00, K-2}$};
\Tzero{9}{0}{1}{$s_{K-2}$}{0}{$s_{K-1}$}; \node at (9, -1.25) {$t_{00, K-1}$};
\Tzero{12}{0}{1}{$s_{K-1}$}{0}{$s_{K}$}; \node at (12, -1.25) {$t_{00, K}$};
\Tone{14.25}{0}{1}{$s_K$}{0}{$s_1$}; \node at (14.25, -1.25) {$t_{01}$};
\end{tikzpicture}}

So the east glues of all tile types in the FST-construction gadget are in $\{s_1, \dots, s_K\}$. 
The east glues of $t_{11, 2}, \dots, t_{11, K}$ are distinct and selected from $\{s_1, \dots, s_K\}$.
Since $t_{11, 2}, \dots, t_{11, K}$ share the south glue~1 with $t_2$, the west glue of $t_2$ is $s_1$, and the system is directed, the west glues are distinct and from $\{s_2, s_3, \ldots, s_K\}$.

The glue~0 is not in $\{s_1, \ldots, s_K\}$, as otherwise a cyan or gray tile could appear in the southwest corner of $P$.
So none of the cyan, gray, or $t_2$ tile types has east glue~0 and thus a $t_0$ tile cannot attach anywhere but the northwest corner of $P$. 
In order to assemble the rest of the top row, none of the orange tile types but $t_2$ can be used. 
These orange tile types share the south glue~0 with $t_{00, 2}, t_{00, 3}, \ldots, t_{00, K}, t_{01}$ so that for the sake of directedness, their west glue cannot be taken from $\{s_1, s_2, \ldots, s_K\}$ but the east glue of other tile types, which are illustrated above, is taken from this set. 
In summary, the tile types usable to assemble the top row except its leftmost position are all the cyan tile types, all the gray ones, and the orange tile type $t_2$. 
We can see that these tile types simulate a transition of an FST with $K$ states $s_1, s_2, \ldots, s_K$ by interpreting their west, south, and east glues as the source of transition, a letter read, and the target of transition, and their color as an output (via $\{\textcolor{cyan}{\blacksquare}, \textcolor{gray}{\blacksquare}, \textcolor{orange}{\blacksquare}\} \rightarrow \{0, 1, 2\}$).

Until the west glues $\{s_2, s_3, \ldots, s_K\}$ are assigned to $t_{11, 2}, t_{11, 3}, \ldots, t_{11, K}$, the simulated FST is not determined.
Nevertheless, the other tiles imply the following properties of the FST: 
\begin{itemize}
\item The FST has a $K$-cycle of $K-1$ (0, 0)-transitions and one (0, 1)-transition starting at the initial state $s_1$. 
\item The state $s_1$ has a (1, 2)-loop. 
\item When making a transition from $s_1$ to another state, 1 is never output. 
\item The only transition to $s_1$ with output 1 is the (0, 1)-transition from $s_K$. 
\end{itemize} 

Now we will prove that the FST must be the one in Figure~\ref{fig:transducer}. 
First, we will prove that the north glues of the orange tiles below $\phi(S')$ in $P$ must be $S$.
Observe that the tile types imply several constraints on the north glues exposed:
\begin{itemize} 
\item A~1 north glue cannot be below a cyan tile (no cyan tile has south glue~1). 
\item A~0 north glue cannot be below an orange tile (no orange tile has south glue~0). 
\item Among consecutive gray tiles, only the eastmost can have a~0 south glue (any gray tile with south glue~0 is not followed by another gray tile, since 1~is never output at $s_1$).
\item The eastmost gray tile in a sequence of consecutive gray tiles followed by an orange tile must have south glue~0 due to the (0, 1)-transition ($t_{01}$ is the unique gray tile type with east glue $s_1$ and has south glue~0). 
\end{itemize}

The only remaining scenario to check is consecutive gray tiles followed by a cyan tile. 
Recall that $S$ and $S'$ consist of three kinds of segments: fixed singletons, half-fixed triples, and half-fixed intervals, and each such segment is preceded by the segment $01 \rightarrow 12$ at the end of the previous such segment (or added initial segment $10^K1 \rightarrow 20^{K-1}12$).  
The output (``$S'$-part'') of a fixed singleton has the form $0^{i-1}10^{K-i}12$. 
The first~1 cannot be output by the (0, 1)-transition (i.e., cannot be tile type $t_{01}$ and must have south glue~1) as otherwise, the transduction does not reach the state $s_K$ after processing the following~0's and~12 is not subsequently output.
Thus, the input (``$S$-part'') of a fixed singleton with output $0^{i-1}10^{K-i}12$ must be $0^{i-1}10^{K-i}01$ and in order to assemble $\phi(S')$ in the pattern $P$, the sequence of north glues of the orange tiles below must be $0^{i-1}10^{K-i}01$.

The same argument implies that in order to assemble a portion of $\phi(S')$ encoding the output of a half-fixed triple, the sequence of north glues of orange tiles must encode the corresponding input of a half-fixed triple.
Because $S$ and $S'$ always contain many fixed singleton segments, tile type $t_{11, K}$ has the west glue~$s_K$; moreover, no other tile type $t_{11, i}$ has west glue~$s_K$, i.e., the FST has no other transition to $s_K$ with output~1.

Finally, we analyze the half-fixed interval, whose output is of the form $(0^{i-1}11010^{K-i-2}1)(20^{i+1}11010^{K-i-4}1)$. 
The~1 in the second occurrence of 10 cannot be output by the (0, 1)-transition, by the previous argument (the transduction does not reach state $s_K$).
In order for the~1 in the first occurrence of 10 to be output by the (0, 1)-transition, the preceding 1 must be the output of the (1, 1)-loop at $s_K$. 
However, there are insufficient preceeding 0s ($i \leq \lceil 2K/3 \rceil$ always) following the previous~2 (at the end of the previous segment) to have reached state $s_K$.
Thus the~1's in both occurrences of 10 in the first parenthesis are not output by the (0, 1)-transition, and instead by (1, 1)-transitions (i.e., are not placements of tile type $t_{01}$ and so have south glues~1). 
The same arguments applies to occurrences in the second parenthesis.

The only remaining flexibility in the design of the tile set is assigning west glues to $t_{11, 2}, \dots, t_{11, K-1}$.
The fixed singletons, half-fixed triples, and half-fixed intervals, whose inputs are a sequence of (hidden) north glues of the bottom (orange) row of the transduction gadget that encode $S$, allow for only the intended assignment illustrated in Figure~\ref{fig:transducer}. 
Consequently, the assignment is possible if and only if the given 3-partition has a solution. 
\end{proof}

\section{Efficiently Solvable \pats{} Problems}
\label{sec:PATS-easy}

The non-uniform height-1 PATS problem is trivially solvable using one tile type for each color.
This idea can be generalized for all patterns of fixed height: 

\begin{theorem}
\label{thm:hxn-NU-cPATS-linear}
The non-uniform height-$h$ $c$-\pats{} problem can be solved in $c^{c^{O(h)}}n$ time.
\end{theorem}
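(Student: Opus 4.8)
The plan is to reduce the search for a minimum RTAS uniquely self-assembling a height-$h$ $c$-colored pattern $P$ to a guess-and-verify procedure over a small space of candidate tile sets, where "small" means a bound depending only on $h$ and $c$ but not on $n$. The key observation is that in a directed RTAS (and by G\"{o}\"{o}s--Orponen we may restrict to directed systems), each column of the assembly above the seed is determined by the pair (south-glue vector along the seed row under that column, west-glue vector on the left edge of that column); more usefully, given the glues on the west edge of a column (an $h$-tuple of glues) and the glues on the south edge (inherited from the seed, which in the non-uniform model we are free to choose), the column's tiles—and hence its colors and its east-edge glue tuple—are forced. So a directed tile set induces a partial transition function on the finite set of "west-edge glue tuples," and a solution corresponds to a walk through these tuples whose emitted color columns spell out $P$.

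First I would bound the number of glues needed. A minimum directed system uses at most one tile type per (color, west-glue, south-glue) triple that actually appears, and standard arguments (as in the easy direction of PATS membership in NP) show the number of distinct glues in a minimum solution is at most the number of distinct columns of $P$ up to the induced equivalence, which is at most $c^h$. Hence both the glue alphabet and the tile-type set can be taken of size $c^{O(h)}$, so the total number of candidate directed tile sets (together with a choice of seed-glue vector, also from a set of size $c^{O(h)}$) is $2^{(c^{O(h)})} = c^{c^{O(h)}}$. For each candidate, verification—simulating the deterministic column-by-column growth and checking that the unique terminal assembly has pattern exactly $P$, and checking directedness—takes $O(n)$ time (each of the $n$ columns is produced and compared in time depending only on $h$ and the glue alphabet size, i.e. $c^{O(h)}$ per column, which is absorbed into the leading constant). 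Multiplying the number of candidates by the per-candidate cost gives the claimed $c^{c^{O(h)}} n$ running time; we then iterate over the target tile-count threshold $t$, or simply return the minimum found, which does not change the bound.

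The second ingredient is correctness of the reduction to the column model, i.e. the claim that a column's behavior really is forced by its west- and south-edge glue tuples in a directed system, and that uniqueness of the assembled pattern is equivalent to every producible terminal assembly agreeing with $P$ on all $nh$ cells. This is essentially the RTAS tiling rule applied inductively up each column, plus the definition of directedness, so it is routine; the only subtlety is handling columns where the growth stalls (a needed tile type is absent), which must be detected and rejected. The main obstacle, and the place requiring the most care, is pinning down the glue-count bound so that it is genuinely independent of $n$: one must argue that a minimum-size solution never benefits from using more than $c^{O(h)}$ glues, because two west-edge glue tuples that induce the same column-behavior function on all reachable south-glue contexts can be identified without increasing the tile count or breaking directedness. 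Once that collapsing argument is in place, the enumeration bound and the linear-time verification follow immediately.
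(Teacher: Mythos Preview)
Your overall strategy---bound the tile set size by something depending only on $h$ and $c$, enumerate all candidate tile sets, and verify each in time linear in $n$---is exactly the paper's approach. Two points, however, deserve attention.

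First, you make the glue bound harder than necessary. The paper simply observes that the trivial construction (a fresh set of $h$ tile types per distinct column color pattern, programmed by the seed) uses at most $T = hc^h$ tile types, so a minimum solution has at most $T$ types and hence at most $4T$ glues. No collapsing or automaton-minimization argument is needed; the ``main obstacle'' you flag dissolves once you use this trivial upper bound rather than trying to argue directly that behaviorally equivalent glue tuples can be merged.

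Second, and more substantively, your verification step is under-specified. You write ``simulating the deterministic column-by-column growth,'' but in the non-uniform model the seed contributes one north glue per column---$n$ of them---which you must \emph{choose}, not merely read. So for a fixed tile set the process is not a deterministic simulation but a search: from each reachable west-edge glue tuple, any south glue that yields the correct color column is a legal transition. Your earlier ``walk through these tuples'' language is the right picture; the verification is a DP over columns tracking the set of reachable west-edge tuples (the paper phrases its DP as tracking the full column tile sequence, which is equivalent). Enumerating only a short ``seed-glue vector'' alongside the tile set, as you suggest, does not cover the $n$ bottom seed glues, and naive enumeration of those would cost $(c^{O(h)})^n$. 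Replace the simulation by this DP and your argument goes through with the stated bound.
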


\begin{proof}
Observe that the minimum-size tile set that assembles any given pattern has size at most $T = hc^h$, since there are $c^h$ possible columns and a distinct set of $h$ tiles for each column (whose appearance is programmed by the seed) can be used to assemble the pattern.
The algorithm is a brute-force search for a smallest solution tile set (of size at most $T$), using dynamic programming to check each tile set in $O(hT^{h+2}n)$ time.
Since a tile can be specified by a binary string of length $4\log{T}\log{c}$, there are at most $2^{4T\log{cT}} = (cT)^{4T}$ such tile sets.
Thus the algorithm runs in $O((cT)^{4T}hT^{h+2}n) = T^{O(T)}n = c^{c^{O(h)}}n$ time.

All that remains is to describe the dynamic programming algorithm.
Lemma~1 of G\"{o}\"{o}s and Orponen~\cite{Goos-2010a} states that any smallest tile set is directed.
The subproblems solved have the form: ``Does the tile set deterministically assemble the first $i$ columns of the pattern with top-to-bottom sequence of tile types $t_1, t_2, \dots, t_h$ in column $i$?''

All subproblems for column $i$ can be solved by checking each of at most $T \cdot T^h = T^{h+1}$ combinations of north seed glue in column $i$ and ``Yes'' subproblems for column $i-1$, and recording ``Yes'' for the top-to-bottom tile sequences resulting from deterministic assembly of column $i$ of the pattern and ``No'' for all other sequences.
For each column, computing the ``Yes'' sequences takes $O(T^{h+1} \cdot hT)$ time and recording the solutions to all subproblems takes $O(T^h)$.
So across all columns the algorithm takes $O(hT^{h+2}n)$ time. 
\end{proof}

As established in Section~\ref{sec:cPATS-hard}, a similar algorithm for the uniform model is impossible unless $\P = \NP$.
Nevertheless, the uniform height-1 PATS problem can be solved in linear time using a pigeonhole argument and a DFS-based search for the longest repetitive suffix of a given height-1 pattern:

\begin{theorem}
\label{thm:1xn-U-PATS-linear}
The uniform height-1 \pats{} problem can be solved in $O(n)$ time.
\end{theorem}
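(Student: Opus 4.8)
The plan is to reduce the problem to finding the shortest pattern suffix that can be generated by a ``loop'' of tile types, then verify this suffix length is compatible with the rest of the pattern. In the uniform height-1 model, a solution RTAS $(T, \sigma)$ has a single seed glue $\ell_\east$ to the east of the seed's single non-corner tile, and a width-1 seed. The pattern $P$ is a string $c_1 c_2 \cdots c_n$ over the colors. Assembly proceeds left to right: a tile of type $t$ placed at position $i$ has west glue equal to the east glue of the tile at position $i-1$ (or $\ell_\east$ if $i = 1$), so a directed solution tile set is exactly a deterministic function $g \mapsto (\mathrm{color}, g')$ from west glues to (color, east glue) pairs. By G\"o\"os--Orponen's Lemma 1, we may restrict attention to directed systems. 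Thus a solution of size $t$ corresponds to a DFA-like transition structure on $\le t$ glue states that, starting from $\ell_\east$, outputs exactly $c_1 \cdots c_n$; equivalently, the sequence of glues $\ell_\east = g_0, g_1, \dots, g_n$ visited must be such that whenever $g_i = g_j$ we have $c_{i+1} = c_{j+1}$ (the Myhill--Nerode-style consistency condition), and the number of distinct glues among $g_0, \dots, g_{n-1}$ is $\le t$.

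First I would observe that minimizing $|T|$ is equivalent to minimizing the number of distinct glues used as west glues, i.e., minimizing $|\{g_0, \dots, g_{n-1}\}|$ over all valid glue sequences. A glue sequence is valid iff it is consistent in the above sense. The key structural fact is that the ``best'' strategy is: use a fresh glue for each new prefix position as long as possible, and then, as soon as the remaining suffix $c_{i+1} \cdots c_n$ is a repetition-compatible continuation, close a loop back to an earlier glue. Concretely, the minimum number of glues equals $n - L$ where $L$ is the length of the longest suffix $c_{n-L+1} \cdots c_n$ that equals the continuation forced by "reconnecting": the system must eventually reuse a glue, and once it does from position $i$ back to position $j$ (so $g_i = g_{j-1}$ say), periodicity with period $i - j + 1$ is forced on the entire tail. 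So I would characterize the optimum as $n$ minus the largest $p$ such that $c_{n-p+1}\cdots c_n$ is determined by an eventually-periodic reconnection — which reduces to finding the longest suffix of $P$ that is periodic with some period $q$ and matches back consistently, a quantity computable by a failure-function / DFS-based search over the pattern in $O(n)$ time.

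The key steps, in order, are: (1) formalize that uniform height-1 directed RTASs correspond bijectively to consistent glue sequences starting at $\ell_\east$, and $|T|$ equals the number of distinct west glues; (2) prove the monotonicity/exchange lemma that an optimal glue sequence has the form "all-distinct prefix, then a periodic tail looping to one earlier glue", so the optimum is $n - L^*$ for a suitable maximal tail length $L^*$; (3) give the combinatorial characterization of $L^*$ as the length of the longest suffix admitting a valid periodic reconnection, and show it is found by a linear-time DFS over a trie/failure structure of $P$; (4) compare $n - L^*$ against the input integer $t$ and answer accordingly. I would present (1) and (4) as routine, devote the bulk of the argument to (2), and use standard string-periodicity (Fine--Wilf, KMP failure function) machinery for (3).

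The main obstacle I expect is step (2): making the exchange argument fully rigorous. The subtlety is that a valid glue sequence need not reuse glues only "at the end" — it could reuse an earlier glue mid-pattern, run a loop, then exit the loop, and one must show this never beats the prefix-then-single-loop strategy. I would handle this by arguing that any reuse of a glue at positions $i < j$ forces $c_{i+1} = c_{j+1}$, and then by an inductive "unrolling then re-folding" argument that the set of distinct glues can only shrink by deferring all reuse to a single terminal loop of the minimal compatible period; care is needed because the first tile's west glue is the fixed $\ell_\east$, so the terminal loop might or might not be allowed to include position $0$, and both cases must be checked. Once that lemma is in place, the rest is bookkeeping and a standard linear-time string computation.
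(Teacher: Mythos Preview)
Your approach is essentially the paper's: model a uniform height-1 directed RTAS as a glue sequence $g_0,\dots,g_{n-1}$ with the consistency condition, argue the optimum is $n-L^*$ for a maximal ``loop-compatible'' tail, and compute $L^*$ by a linear-time string structure. The paper phrases the target quantity crisply as: let $y$ be the longest suffix of $P$ that is a proper prefix of some longer suffix $x$ of $P$; then the answer is $m=n-|y|$, found by a suffix-tree DFS for the deepest non-leaf suffix.

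Where you diverge is in step~(2), which you flag as the main obstacle and plan to attack by an exchange/unrolling argument over arbitrary glue-reuse patterns. The paper sidesteps this entirely with a two-line pigeonhole: if fewer than $m$ tile types suffice, then among the first $m$ positions two tiles share a type, hence share an east glue, hence the two suffixes starting there satisfy ``shorter is a prefix of longer'' with the shorter strictly longer than $|y|$ --- contradicting maximality of $y$. No normal form for optimal glue sequences is needed; necessity falls out directly once you name the right invariant. Sufficiency is the obvious construction (hardcode the prefix before $x$, then a $|x|-|y|$-cycle for the periodic tail). So your plan is correct but heavier than required; the clean move is to define $L^*$ as $|y|$ above and let pigeonhole do the work, rather than proving an exchange lemma about optimal glue sequences.
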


\begin{proof}
Consider an input width-$n$ height-1 pattern $P$ for the uniform height-1 PATS problem as a word of length $n$ over an alphabet of colors $C$.
Let $x, y$ be distinct suffices of $P$ with $y$ a prefix of $x$ and $y$ as long as possible.
Let $x = zy$ for some nonempty word $z$. 
Since $y$ is a prefix of $x$, $z$ is a period of $x$ and thus $x = z^i z_p$ for some $i \ge 1$ and prefix $z_p$ of $z$. 
Moreover, $z$ is primitive (not a power of another word) since $y$ is as long as possible.

Let $m = n{-}|y|$.
We prove that $m$ tile types are necessary and sufficient for a uniform RTAS to uniquely self-assemble $P$. 
For sufficiency, use a tile set that hardcodes the prefix of $P$ preceeding $x$ with $m{-}|x|$ tile types and uses a repeating set of $|x|{-}|y|$ tile types to assemble the repetitions of $z$. 
For necessity, suppose $P$ can be uniquely self-assembled using strictly less than $m$ tile types.
By the pigeonhole principle, there must exist $1 \le i < j \le m$ such that the tiles at $(i, 1)$ and $(j, 1)$ have the same type.
Then the suffixes $x'$ and $y'$ of $P$ starting at positions $(i, 1)$ and $(j, 1)$ are distinct suffices of $P$ with $y' > y$, a contradiction with the previous choice of $x$ and $y$.
So $m$ is the minimum number of tile types to uniquely assemble $P$. 

All that remains is to prove that $y$ (and $x$) can be computed in $O(n)$ time.
This can be done by DFS in a suffix tree, searching for the longest suffix ending at a non-leaf node.
The suffix tree can be constructed from $P$ in $O(n)$ time~\cite{Farach-1997}, and the DFS operates in the same running time.
\end{proof}

\section{Conclusion}

Our work here extends the extensive prior work on the parameterized $c$-\pats{} problem to also incorporate pattern height and uniformity, and finds a more delicate complexity landscape: limited height and colors do not make the \pats{} problem tractable, except when combined in the non-uniform model, or in degenerate cases (height-1 or $1$-\pats{}).
A single combination of parameters and model remains unresolved; we conjecture the following:

\begin{conjecture}
The uniform height-2 2-\pats{} problem is \NP-hard. 
\end{conjecture}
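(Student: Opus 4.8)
The plan is to retrace the three-color argument of Theorem~\ref{thm:2xn-U-4PATS-NP-hard}, driving the color count down to two by absorbing the role played there by the orange symbol (the output symbol~$2$) into short binary ``punctuation'' codewords. The first step is to isolate a still more constrained transducer problem --- a binary-output analogue of \mmefst{} --- whose output alphabet is $\{0,1\}$ but which carries enough promises (each state has a unique incoming $0$-transition and a unique incoming $1$-transition; the $(1,1)$-transitions decompose into $1$-cycles and $3$-cycles; a prescribed multiset and traversal order of $(0,0)$-, $(1,1)$-, and $(0,1)$-transitions; prescribed first and last windows of $S$ and $S'$) to make the subsequent reduction to PATS go through. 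Its \NP-hardness would follow, as in Lemmas~\ref{lem:promise-min-state-FST-NP-hard} and~\ref{lem:mmefst-NP-hard}, from a modified reduction from \tpart{}: the fixed-singleton, half-fixed-triple, and half-fixed-interval segments are reused almost verbatim, and every appearance of the separator symbol~$2$ is replaced by a fixed binary block (for instance reading $0^{K}1$ and outputting $10^{K-1}1$) that forces the transduction to pass through a distinguished ``reset'' state before the next segment starts. The purpose of such a block is to reconstruct over $\{0,1\}$ the unambiguous segment boundaries that the symbol~$2$ provided for free in Lemma~\ref{lem:mmefst-NP-hard}.

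The second step is to reduce this binary transducer problem to uniform height-$2$ $2$-\pats{}, following the template of Theorem~\ref{thm:2xn-U-4PATS-NP-hard} essentially line by line. The target pattern again consists of a short left transduction gadget whose top row spells $\phi(S')$ over a long monochromatic bottom row, followed by an FST-constructor gadget cut into $K$ width-$K$ blocks whose top rows are the binary words $w_i$ chosen to distinguish the states $s_1,\dots,s_K$. One invokes G\"{o}\"{o}s and Orponen~\cite{Goos-2010a} to restrict attention to directed RTASs, exploits each color change at the right end of a long monochromatic run to force $|S'|+O(1)$ pairwise-distinct tile types there, uses the block structure to force $K$ tile types of each color carrying the expected shared south glues, and finally argues that the hidden north glues below $\phi(S')$ are forced to encode $S$, so that the only remaining freedom --- the assignment of west glues to the $(1,1)$-transition tiles --- is exactly the freedom of completing a \tpart{} partition.

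The principal obstacle is that two colors no longer suffice to keep the ``$S$-encoding row'' disjoint from the transition tiles the way orange did: the bottom-row tiles of the transduction gadget now share both their color and their south glue~$0$ with the $(0,0)$-transition tiles, so a candidate tile set could try to assemble the $S$-encoding row by recycling transition tiles and thereby undercut the intended tile budget, destroying the reduction. The pattern must therefore encode a structural obstruction to this recycling --- for example by making the $S$-encoding portion long and aperiodic in a way incompatible with the forced east-glue chain $s_1,s_2,\dots,s_K$ of the $(0,0)$-transition tiles, or by inserting an auxiliary monochromatic spacer whose forced distinct types simultaneously pin down the budget and block the reuse. A related nuisance is re-engineering the block-boundary markers of Theorem~\ref{thm:2xn-U-4PATS-NP-hard} (which placed orange at the northeast corner of the final block only) using the ``minority'' color or an extra distinguished transition type, and re-deriving the reachability arguments (``the transduction does not reach $s_K$'') for the new binary punctuation blocks while keeping the $K\not\equiv 0\pmod{3}$ bookkeeping intact. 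I expect these last points to be technical rather than conceptual; the color-collision issue is where the real difficulty --- and the reason the case has resisted proof --- lies.
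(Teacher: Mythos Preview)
The statement you are attempting to prove is labelled a \emph{Conjecture} in the paper, not a theorem: the authors explicitly write that this ``single combination of parameters and model remains unresolved'' and offer no proof. There is therefore no argument in the paper to compare your proposal against.

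As for the proposal itself, it is not a proof but a research plan, and you are candid about this. The outline---replace the ternary output alphabet of \mmefst{} by binary codewords, then rerun the gadget construction of Theorem~\ref{thm:2xn-U-4PATS-NP-hard}---is the natural thing to try, and you correctly isolate the genuine obstruction: in the three-color proof the long orange bottom row of the transduction gadget is forced to use $|S'|+1$ pairwise-distinct tile types precisely because orange is a color not shared by any transition tile, and this is what pins the tile budget tightly enough for the rest of the counting to go through. With two colors that separation evaporates: the bottom-row tiles share both their color and their south glue~$0$ with the $(0,0)$-transition tiles $t_{00,2},\dots,t_{00,K}$, so a directed system could in principle tile the entire bottom row with a short cycle of transition tiles, collapse the budget, and invalidate every subsequent inference. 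Your suggested remedies (forcing aperiodicity incompatible with the $s_1,\dots,s_K$ glue chain, or inserting auxiliary spacers) are gestures rather than constructions, and it is not at all clear that either can be carried out while preserving the directedness and block-counting arguments that Theorem~\ref{thm:2xn-U-4PATS-NP-hard} relies on. The secondary issues you flag---reconstructing the orange corner marker and redoing the reachability bookkeeping---would indeed be technical if the main obstacle were overcome, but the main obstacle is exactly why the problem is stated as a conjecture.
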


We encourage further parameterized analysis of problems in tile self-assembly in support of recent efforts in developing a more complete understanding of the structural complexity of tile self-assembly (see~\cite{Woods-2015a}). 

\section*{Acknowledgements} 
We thank Yo-Sub Han for very fruitful discussions about finite automata and tile self-assembly, and anonymous reviewers for comments that improved the paper.

\bibliographystyle{plain}
\bibliography{thin_pats}

\end{document}